\documentclass{aptpub1}

\usepackage{epsf}
\usepackage{upgreek}
\usepackage{wrapfig}
\usepackage{boxedminipage}
\usepackage{pifont}
\usepackage{euscript}
\usepackage{dcolumn}
\usepackage{float}
\usepackage{fancybox}
\usepackage{rotating}
\usepackage{multirow}
\usepackage{latexsym}
\usepackage{amssymb}
\usepackage{txfonts}
\usepackage{amsmath}
\usepackage{epsfig}
\usepackage{color}
\usepackage{verbatim}
\usepackage{subfigure}
\usepackage{mathrsfs}
\usepackage{url}
\usepackage{lscape}







\usepackage{graphicx}






\newtheorem{alg}{Algorithm}[section]

\newcommand{\ealg}{\end{alg}}
\newcommand{\balg}{\begin{alg}}



\newcommand{\bigO}{\EuScript{O}}

\newcommand{\bX}{\mathbf{X}}
\newcommand{\bx}{\mathbf{x}}

\newcommand{\by}{\mathbf{y}}

\newcommand{\bu}{\mathbf{u}}

\newcommand{\bY}{\mathbf{Y}}

\newcommand{\halmos}{\vspace{3mm} \hfill \mbox{$\Box$}}

\newcommand{\ben}{\begin{enumerate}}
\newcommand{\een}{\end{enumerate}}
\newcommand{\beq}{\begin{equation}}
\newcommand{\eeq}{\end{equation}}
\newcommand{\ei}{\end{itemize}}
\newcommand{\bex}{\begin{example}}
\newcommand{\eex}{\end{example}}
\newcommand{\berem}{\begin{remark}}
\newcommand{\erem}{\end{remark}}
\newcommand{\beprop}{\begin{proposition}}
\newcommand{\eprop}{\end{proposition}}

\newcommand{\Var}{\text{Var}}

\newcommand{\bv}{\mathbf{v}}

\renewcommand{\epsilon}{\varepsilon}
\renewcommand{\rho}{\varrho}
\renewcommand{\log}{\ln}
\renewcommand{\hat}{\widehat}
\renewcommand{\leq}{\leqslant}
\renewcommand{\geq}{\geqslant}

\newcommand{\argmax}{\mathop{\rm argmax}}
\newcommand{\argmin}{\mathop{\rm argmin}}








\newcommand{\Geo}{{\sf Geom}}






























\newcommand{\Em}{\mathbb E}
\newcommand{\Pm}{\mathbb P}
\newcommand{\R}{\mathbb R}

\newcommand{\scF}{\mathscr{F}}

\newcommand{\gvn}{\,|\,}

\newcommand{\e}{\text{e}}

\newcommand{\di}{\text{d}}

\def\acro#1#2{\vskip4pt\hbox to\textwidth{\normalsize
\hbox to5pc{#1\hfill}\vtop{\advance\hsize by
-5pc\raggedright\noindent#2}}}

\def\symbol#1#2{\vskip4pt\hbox to\textwidth{\normalsize
\hbox to5pc{#1\hfill}\vtop{\advance\hsize by
-5pc\raggedright\noindent#2}}}


\newcommand{\iidsim}{\stackrel{\text{iid}}{\sim}}
\newcommand{\simiid}{\iidsim}

\newcommand{\chk}[1]{}

\newcommand{\idef}{\stackrel{\text{def}}{=}}





\usepackage{psfrag}
\usepackage{algorithm}
\usepackage[noend]{algorithmic}
\algsetup{indent=2em}

\renewcommand{\argmax}{\operatornamewithlimits{argmax}}

\newcommand{\Ftail}{\overline F}
\newcommand{\Fdtail}{\overline{F^{*d}}}
\newcommand{\Fdmtail}{\overline{F^{*(d-1)}}}
\newcommand{\Exp}{{\mathbb E}\,}
\newcommand{\Ind}{\mathbb I}

\newcommand{\nat}{{\mathbb N}}

\newcommand{\dd}{{\mathrm{d}}}

\newcommand{\II}{\mathbb{I}}




\newcommand{\simapprox}{\stackrel{\mathrm{approx}}{\sim}}
\newcommand{\sV}{\mathscr{V}}
\newcommand{\scG}{\mathscr{G}}

\newif\ifnotes\notestrue
%

%

\authornames{Botev, Ridder, Rojas-Nandayapa } 
\shorttitle{Semiparametric CE method}

\begin{document}

\title{Semiparametric Cross Entropy  for rare-event simulation}

\authorone[The University of New South Wales]{Z. I. Botev}
\authortwo[Vrije Universiteit ]{A. Ridder}
\authorthree[The University of Queensland]{L. Rojas-Nandayapa}

\addressone{School of Mathematics and Statistics,
University of New South Wales,
Sydney, NSW 2052
Australia } 

\addresstwo{School of Mathematics and Physics,
The University of Queensland,
Brisbane, QLD 4072, Australia}

\addressthree{Department Econometrics and Operations Research,
Vrije Universiteit, 1081 HV, Amsterdam }

\begin{abstract}
  The Cross Entropy method is a well-known adaptive importance sampling method for rare-event probability estimation, which requires estimating an optimal importance sampling density within a parametric class.   In this article we   estimate an optimal importance sampling density within a wider semiparametric class
	of distributions. We show that this semiparametric version of the Cross Entropy method
	frequently yields efficient estimators. We illustrate the excellent practical performance of the method with
	numerical experiments and show that for the problems we consider it typically outperforms alternative schemes by orders of magnitude.
\end{abstract}

\keywords{light-tailed; regularly-varying; subexponential;  rare-event probability; Cross Entropy method, Markov chain Monte Carlo}

\ams{65C05}{65C60;65C40}

\renewcommand{\thefootnote}{\arabic{footnote}}
\setcounter{footnote}{0}

\section{Introduction}
In this article we consider the problem of estimating rare-event
probabilities of the form
\[
\ell=\Pm(S(\bX)>\gamma), \quad \bX=(X_1,\ldots,X_d),
\]
where $S(\bx)=x_1+\cdots+x_d$ and $X_1,\ldots,X_d$ are (possibly dependent)
random variables.
We call these the jump variables.
Such estimation problems arise in various contexts,
see, for example, \cite{mcis:asmgly07,AKR05,EMB97}.
We describe an adaptive importance sampling algorithm, which can be
viewed as the semiparametric version of the well-known Cross Entropy (CE)
method for estimation of rare-event probabilities \cite{kroeseRub11}.
The main ingredients  of the semiparametric CE method are as follows.

First, similar to \cite{mcis:botev10,botev2013markov} we use a  Markov Chain Monte Carlo (MCMC) algorithm to obtain
random variables distributed according to the minimum variance importance sampling density.
In our context the minimum variance importance sampling density is simply the density of the vector $\bX$ conditioned on the rare event $S(\bX)>\gamma$. Second, with  the MCMC sample at hand, we construct
a conditional (or a Rao-Blackwell) estimator of each of the marginal densities of the minimum variance importance sampling density. Finally, we use the product of these (estimated) marginal densities as our importance sampling density in order to estimate $\ell$. Under idealized conditions that ignore the error arising from the MCMC sampling, we show that the resulting estimator achieves either logarithmic or bounded relative error efficiencies. The strength of the method is not only that it outperforms the currently recommended  estimation procedures for heavy-tailed probabilities, but that the exact same procedure is efficient in problems with light-tailed probabilities.  For example, we show that unlike any existing procedures, the method is efficient in the  Weibull case for all values of the tail index $\alpha$, even in the light-tailed case with $\alpha>1$.

Numerical experiments show that, despite the heuristic nature of
the MCMC step, the estimator can in practice be frequently more
reliable and efficient than tailor-made importance sampling schemes.
In other words, an advantage of the methodology advocated here is that a
single broadly-applicable heuristic algorithm provides satisfactory
practical performance on a range of different estimation problems (both in  light- and heavy-tailed cases) and
frequently this performance is superior to  estimation schemes that
are specifically designed  to a particular rare-event estimation problem.

The rest of the paper is organized as follows.
In Section~\ref{sec:CE} we quickly review the parametric CE method and
introduce its semiparametric version.
This is followed by a number of examples with details about the practical
implementation of the estimator. The examples aims to  demonstrate
the superior performance of the proposed algorithm compared to
existing estimation algorithms on a number of prototypical examples. In Section~\ref{sec:analysis} we provide theoretical analysis of the efficiency of a simple version of the estimator  for light- and
heavy- tailed random variables. Finally, Section~\ref{sec:conclusion} gives some concluding remarks.

\section{Cross Entropy method}
\label{sec:CE}
\subsection{Parametric Cross Entropy method}
In order to introduce the semiparametric version of the CE method,
we briefly review the CE method itself.
Let $f(\bx)$ be the joint density of the vector $\bX=(X_1,\ldots,X_d)$
and suppose that it is part of the parametric family
\begin{equation}\label{e:family_F}
\scF=\Big\{f(\cdot;\bv): \R^d\to \R_{\geq 0}: \int f(\bx;\bv)\,\di\bx =1;
\bv\in \sV\Big\},
\end{equation}
where $\sV\subset\R^p$ is the feasible parameter set.
The assumption is that $f(\bx)\equiv f(\bx;\bu)\in\scF$ for some $\bu\in \sV$.
Then, the objective is to find a parameter $\bv\in\sV$  that yields a
good  importance sampling estimator of the form:
\begin{equation}
\label{CE est}
\hat\ell_\mathrm{CE}=\frac{1}{m}\sum_{i=1}^m
\II\{S(\bY_i)>\gamma\} \frac{f(\bY_i;\bu)}{f(\bY_i;\bv)},
\qquad \bY_1,\ldots,\bY_m \simiid f(\by;\bv)\;.
\end{equation}
In the CE method the best parameter $\bv^*\in\sV$ is the one which minimizes
the cross entropy distance between $f(\cdot;\bv)\in\scF$ and the
zero-variance importance sampling density
\[
\pi(\bx)=\frac{\II\{S(\bx)>\gamma\}f(\bx)}{\Pm(S(\bX)>\gamma)}\;.
\]
In other words,
\begin{equation}
\label{argmax for v*}
 \bv^*=\argmin_{\bv\in\sV} \int\pi(\bx)\,
 \log\left(\frac{\pi(\bx)}{f(\bx;\bv)}\right)\di\bx
 =\argmax_{\bv\in\sV}  \int\pi(\bx)\,\log f(\bx;\bv) \,\di\bx\;.
\end{equation}
In practice the integral
$\int\pi(\bx)\log\left(\frac{\pi(\bx)}{f(\bx;\bv)}\right)\di \bx$
is estimated   from a preliminary simulation so that we obtain
the estimator of $\bv^*$:
\begin{equation}
\label{argmax for hatv}
\hat{\bv^*}=\argmax_{\bv\in\sV} \sum_{i=1}^n \log f(\bX_i,\bv),
\end{equation}
where $\bX_1,\ldots,\bX_n$ is an approximate sample from $\pi$ obtained
via Markov chain Monte Carlo (MCMC) sampling over the restricted set $\mathscr{S}_\gamma$, see \cite{mcis:chan} and  Remark~\ref{rem:Gibbs} below.
In this way we  use MCMC to learn about the optimal (in cross entropy sense)
parameter $\bv^*$. In many applications the parametric
density $f(\cdot;\bv)$ is of product form:
$f(\bx;\bv)=\prod_{i=1}^d f_i(x_i;v_i)$.
For the special case where each $f_i(x_i;v_i)$ belongs to a
one-parameter exponential family parametrized by the mean
\cite[Pages 69-70]{varred:cebook},
the solution of \eqref{argmax for hatv} is given by the
maximum-likelihood estimator of the mean vector:
\[
 \hat{v^*_i}=\frac{1}{n}\sum_{j=1}^n  X_{j,i},\quad i=1,\ldots,d\;,
\]
where $X_{j,i}$ is the $i$-th coordinate of the $j$-th sample
$\bX_j$.
We thus use the importance sampling estimator \eqref{CE est}
with $\bv=\hat\bv^*$.
\begin{remark}[Generating $\bX_1,\ldots,\bX_n$ via Gibbs sampling]
\label{rem:Gibbs}
 In our discussion we  assume that the conditional
densities $\pi(x_i\gvn\bx_{-i})$ are available in closed form. We can thus use the following Gibbs sampling procedure to obtain $\bX_1,\ldots,\bX_n\simapprox \pi$.
\begin{alg} [Gibbs Sampler]~ \label{alg.gibbs}
\rm\begin{algorithmic}
\REQUIRE An initial state $\bX_0\sim f(\bx)$ and sample size $n$.
\FOR{$t=0,\ldots,n-1$}
 \STATE{Set $\bY=\bX_{t}$.
}
 \FOR{$i=1,\ldots,d$}
   \STATE{Draw $Y_i\sim\pi(y_i \gvn Y_1,\ldots,Y_{i-1},X_{t,i+1},\ldots,X_{t,d})$. }
\ENDFOR
\STATE{Set $\bX_{t+1} = \bY$.}
\ENDFOR
\end{algorithmic}
\end{alg}

\end{remark}

\subsection{Semiparametric Importance sampling}
\label{subsec:semiparm_IS}
Recall that the original CE method aims to find the best importance
sampling density
$f(\cdot;\bv^*)\in\scF$ within the parametric family
\eqref{e:family_F}; namely
by solving the parametric optimization program \eqref{argmax for v*}.
In contrast, in the semiparametric CE method the objective is to find
the optimal importance sampling density amongst
a family of densities given by some common property.
Again, the optimality criterion is to minimize the cross-entropy
distance from the the zero-variance density.
Denote by $\scG_1$ the set of all single-variate probability
density functions; that is, $g(x):\R\to\R_{\geq 0}$ is absolute continuous
with $\int g(x)\,\di x=1$.
Let $\scG$ be the family of product-form densities on $\R^d$:
\[
\scG=\Big\{g(\cdot): \R^d\to \R_{\geq 0}:
g(\bx)=\prod_{i=1}^d g_i(x_i); g_i\in\scG_1, i=1,\ldots,d\Big\}.
\]
In this paper we consider $\scG$ as the target set of importance sampling
densities. Hence, the objective is to
solve the functional optimization program
$
\min_{g\in\scG}\int\pi(\bx)\,
\log\left(\frac{\pi(\bx)}{g(\bx)}\right)\di\bx.
$
This is equivalent to
\begin{equation}\label{e:semiparmprogram}
g(\bx)=\argmin_{g_1,\ldots,g_d\in\scG_1} \int\pi(\bx)\,
\log\left(\frac{\pi(\bx)}{\prod_{i=1}^d g_i(x_i)}\right)\di\bx
=\argmax_{g_1,\ldots,g_d\in\scG_1} \int\pi(\bx)\,
\log\left(\prod_{i=1}^d g_i(x_i)\right)\di\bx.
\end{equation}

\begin{lemma}\label{Lemma1}
Let $\pi_i(x_i)$ be the $i$-th marginal of the zero-variance density $\pi(\bx)$.
Then the solution to the semiparametric CE program \eqref{e:semiparmprogram}
is $g_i=\pi_i$ for all $i=1,\ldots,d$.
In other words, the optimal importance sampling density within the space
of all product-form densities is the one given by the product of the
marginals of $\pi(\bx)$.
\end{lemma}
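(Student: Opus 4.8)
The plan is to exploit the product structure of the competing densities to decouple program \eqref{e:semiparmprogram} into $d$ independent one-dimensional problems, each of which is a classical relative-entropy minimization with a well-known solution.

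First I would rewrite the objective. Because $\log\bigl(\prod_{i=1}^d g_i(x_i)\bigr)=\sum_{i=1}^d\log g_i(x_i)$, linearity of integration gives $\int\pi(\bx)\log\bigl(\prod_{i=1}^d g_i(x_i)\bigr)\,\di\bx=\sum_{i=1}^d\int\pi(\bx)\log g_i(x_i)\,\di\bx$. In the $i$-th term the integrand depends on $\bx$ only through $x_i$, so integrating out the remaining coordinates first (Fubini, after splitting into positive and negative parts) turns it into $\int_{\R}\pi_i(x_i)\log g_i(x_i)\,\di x_i$, where $\pi_i(x_i)=\int\pi(\bx)\,\di x_{-i}$ is the $i$-th marginal. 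Hence the joint maximization over $(g_1,\ldots,g_d)$ is equivalent to separately maximizing $g_i\mapsto\int_{\R}\pi_i(x_i)\log g_i(x_i)\,\di x_i$ over $g_i\in\scG_1$, for each $i=1,\ldots,d$.

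Second, for a fixed $i$ I would identify $g_i=\pi_i$ as the maximizer. Subtracting the ($g_i$-independent) constant $\int\pi_i\log\pi_i$ and applying the elementary inequality $\log t\le t-1$ pointwise yields $\int\pi_i\log g_i-\int\pi_i\log\pi_i=\int\pi_i\log(g_i/\pi_i)\,\di x_i\le\int(g_i-\pi_i)\,\di x_i=1-1=0$, with equality precisely when $g_i=\pi_i$ almost everywhere; equivalently, the objective equals $-\KL(\pi_i\,\|\,g_i)$ up to an additive constant, and the Kullback--Leibler divergence is nonnegative and vanishes only at $g_i=\pi_i$. Reassembling the coordinates gives $g(\bx)=\prod_{i=1}^d\pi_i(x_i)$ as the (almost everywhere unique) solution of \eqref{e:semiparmprogram}.

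The only delicate point — and the step I would be most careful about — is integrability: one should assume, as holds in all cases of interest here, that $\int\pi_i|\log\pi_i|\,\di x_i<\infty$, so that the constant $\int\pi_i\log\pi_i$ is finite and the splitting $\int\pi_i\log g_i-\int\pi_i\log\pi_i=\int\pi_i\log(g_i/\pi_i)$ is legitimate. With the conventions that the integrand is taken to be $0$ wherever $\pi_i=0$, and that any $g_i$ vanishing on a set of positive $\pi_i$-measure makes the objective equal to $-\infty$ (and hence cannot be a maximizer), the decoupling argument above carries through verbatim.
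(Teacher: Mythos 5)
Your proof is correct and follows essentially the same route as the paper's: decouple the product-form objective into $d$ one-dimensional cross-entropy maximizations via linearity and Fubini, then identify $\pi_i$ as the maximizer of each. The only differences are cosmetic — the paper simply cites ``properties of the cross-entropy distance'' for the one-dimensional step where you spell it out via $\log t \le t-1$, and you add an explicit integrability caveat that the paper leaves implicit.
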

The proof is given in the Appendix. 
In practice the marginal densities of $\pi$ are not available
(just like the exact $\bv^*$ in \eqref{argmax for v*} is not available) and
need to be estimated from simulation. Here we use the estimators
\begin{equation}
\label{hat g}
 \hat\pi_i(y_i)=\frac{1}{n}\sum_{k=1}^n\pi(y_i\gvn\bX_{k,-i}),\quad i=1,\ldots,d\;,
\end{equation}
where
\begin{itemize}
\item   $\bX_1,\ldots,\bX_n$ is an approximate sample from $\pi$ obtained
via  Gibbs sampling as in \eqref{argmax for hatv} (see also Remark~\ref{rem:Gibbs});
\item the vector $\bX_{k,-i}$ is the
same as $\bX_k$ except that the $i$-th component is removed;
\item 
$\pi(x_i\gvn\bX_{k,-i})$ is the conditional density of $x_i$ given all the
other components of $\bX_k$.
\end{itemize}
 The estimator \eqref{hat g} is motivated by the simple
identity:
\begin{align*}
\Em_\pi& [\hat\pi_i(y)]=
\frac{1}{n}\sum_{k=1}^n\Em_\pi[\pi(y\gvn\bX_{k,-i})]
=\Em_\pi[\pi(y\gvn\bX_{-i})]\\
&=\Em_\pi[\pi(y\gvn X_1,\ldots,X_{i-1},X_{i+1},\ldots,X_d)]\\
&=\int\, \pi(y\gvn x_1,\ldots,x_{i-1},x_{i+1},\ldots,x_d)\,\pi(\bx)\;\di\bx\\
&=
\int\, \frac{\pi(x_1,\ldots,x_{i-1},y,x_{i+1},\ldots,x_d)}{\pi(x_1,\ldots,x_{i-1},x_{i+1},\ldots,x_d)}\,\pi(\bx)\;\di\bx
\\
&=
\int \frac{\pi(x_1,\ldots,x_{i-1},y,x_{i+1},\ldots,x_d)}{\pi(x_1,\ldots,x_{i-1},x_{i+1},\ldots,x_d)}\;\di \bx_{-i}\;\times\;
\overbrace{\int \pi(x_1,\ldots,x_d)\,\di x_i}^{\pi(x_1,\ldots,x_{i-1},x_{i+1},\ldots,x_d)}
\\
&=\int\pi(x_1,\ldots,x_{i-1},y,x_{i+1},\ldots,x_d)\;\di \bx_{-i}=\pi_i(y)\;.
\end{align*}
\noindent
We define the approximation to the optimal semiparametric CE solution by
the product of marginal density estimators \eqref{hat g}, that is,
\begin{equation}\label{e:appoptsemi}
\hat g(\by)\idef \prod_{i=1}^d\hat\pi_i(y_i).
\end{equation}

Then we estimate $\ell$ by the importance sampling estimator
\begin{equation}
\label{semiparam IS}
\hat\ell=\frac{1}{m}\sum_{i=1}^m  \II\{S(\bY_i)>\gamma\}\frac{f(\bY_i)}{\hat g(\bY_i)},\qquad
\bY_1,\ldots,\bY_m\simiid \hat g(\by)\;,
\end{equation}

Note that, conditional on $\bX_1,\ldots,\bX_n$, each $\hat\pi_i$ is an equally weighted mixture of $n$ densities (with $k$-th component $\pi(y_i\gvn \bX_{k,-i})$) and hence sampling
$Y_i\sim\hat\pi_i(y_i)$ can be performed using the composition method \cite{mcis:handbook}[Page 53].
In other words, choose a component of the mixture at random  by generating $K$ uniformly from the set
of integers $\{1,\ldots,n\}$. Then, given  $K=k$, sample $Y_i$ from the $k$-th mixture component $Y\sim\pi(y_i\gvn \bX_{k,-i})$. Finally,  deliver $Y_i$ as a realization from  $\hat\pi(y_i)$ and $(Y_1,\ldots,Y_d)$ as a realization from $\hat g(\by)$.
\begin{remark}[Using exact conditional density]
\label{remark}
Note that once we have sampled $Y_1,\ldots,Y_{d-1}$ from $\hat\pi_1,\ldots,\hat\pi_{d-1}$, respectively, we have the option of sampling the final $Y_d$ from the exact conditional $\pi(y_d\gvn Y_1,\ldots,Y_{d-1})$, instead of from the $d$-th marginal $\hat\pi_d$. This reduces the cross entropy distance to $\pi$ even further and yields the alternative and typically more efficient
estimator \eqref{semiparam IS} with $\hat g(\by)$ redefined as
\[
\hat g(\by)\leftarrow \hat\pi_1(y_1)\times\cdots \times\hat\pi_{d-1}(y_{d-1})\times \pi(y_d\gvn y_1,\ldots,y_{d-1})\;.
\]
\end{remark}

\section{Examples and Practical Implementation}
In this section we consider the prototypical problem of estimating $\Pm(X_1+\cdots+X_d>\gamma)$, where the jumps
$X_1,X_2,\ldots$ may or may not be dependent. 
In the case of independent jumps, the proposed importance sampling can yield  practical performance surpassing that of well established alternative estimation procedures such as the Asmussen-Kroese (AK) estimator \cite{asmkro06,asmussen2012error}. This is in part due to the fact that our estimator incorporates the ingenious exchangeability and conditioning  proposed in \cite{asmkro06}.   First, recall that the AK  estimator in \cite{asmkro06} based on one replication is given by
\[
\hat\ell_\mathrm{AK}=d\Ftail\Big(\Big(\gamma-\sum_{j=1}^{d-1}X_j\Big)\vee\max_{j<d} X_j\Big),\quad X_1,\ldots,X_{d-1}\simiid F\;.
\]
 The motivation for the estimator is the identity
$
\ell=d \,\Pm\Big(X_1+\cdots+X_d>\gamma, X_d=M_d\Big)=d\,\Em\Ftail\Big(\Big(\gamma-\sum_{j=1}^{d-1}X_j\Big)\vee\max_{j<d} X_j\Big)\;,
$
where $x\vee y=\max\{x,y\}$ and $M_d\idef \max_{j\leq d} X_j$.
This conditional estimator enjoys excellent practical performance
for the problems we consider below. For further details we refer to \cite{asmussen2012error,hartinger2009efficiency}, where the authors  prove that the estimator is a vanishing relative error one.

We obtain an estimator that  outperforms $\hat\ell_\mathrm{AK}$  in terms of (estimated) relative time variance by exploiting  the decomposition proposed in \cite{juneja2007estimating} and the ex
\[
\begin{split}
\ell&=\Pm(M_d>\gamma)+\Pm(S(\bX)>\gamma,M_d<\gamma)\\
&= \Pm(M_d>\gamma)+d\,\Pm(S(\bX)>\gamma,X_d=M_d<\gamma),\quad  \textrm{ by exchangeability of jumps}\\
&=1-\Pm(M_d<\gamma)+ d\;\Pm(X_d=M_d<\gamma) \;\Pm(S(\bX)>\gamma\gvn X_d=M_d<\gamma)\\
&=\overbrace{1-[F(\gamma)]^d}^{\textrm{dominant term}}+ \Pm(M_d<\gamma)\overbrace{\widetilde\Pm\Big(S(\bX)>\gamma\Big)}^{\textrm{residual probability}}\;,
\end{split}
\]
where the new probability measure $\widetilde\Pm(\cdot)=\Pm(\cdot\gvn X_d=M_d<\gamma)$ with corresponding density 
\[
\widetilde{f}(\bx) = f(\bx\gvn X_d=M_d<\gamma)
=\frac{d\,f(\bx)}{[F(\gamma)]^d}\,\II\left\{M_d<\gamma, X_d=M_d\right\}. 
\]
Estimating the residual probability, we obtain the one replication estimator
for $\ell$ as
\begin{equation}
\begin{split}
\label{mod est}
\hat\ell=&1-[F(\gamma)]^d+ \frac{\widetilde f(\bY)}{\hat g(\bY) }\II\Big\{S(\bY)>\gamma\Big\},\qquad \bY\sim \hat g(\by)\;,
\end{split}
\end{equation}

where $\hat g(\by)\idef \hat\pi_1(y_1)\cdots \hat\pi_{d-1}(y_{d-1})\;\pi(y_d\gvn y_1,\ldots,y_{d-1})$ is the estimated importance sampling pdf  described in Remark~\ref{remark}. 

In the following examples we used the relative time variance product (RTVP) and the ratio of relative errors as a measure  of efficiency:
\[
\mathrm{Ratio}\idef\frac{\hat \sigma_\mathrm{AK}/\hat\ell_\mathrm{AK}}{\hat\sigma/\hat\ell}\;,\qquad \mathrm{RTVP}\idef \mathrm{Ratio}^2\times \frac{\tau_\mathrm{AK}}{\tau}\;,
\]
where $ \hat\sigma_\mathrm{AK}$ and $ \hat\sigma$  are the sample standard deviations of  $\hat\ell_\mathrm{AK}$ and $\hat\ell$ (all based on $m$
replications), respectively , and $\tau_\mathrm{AK}$ and $\tau$ are the CPU times taken to compute the respective estimators. The quantity $\tau$ includes the CPU time needed for the preliminary  MCMC simulations. 
\begin{example}[Weibull case]
\label{ex:weibull}
Here we wish to estimate $\Pm(X_1+\cdots+X_d>\gamma)$ and assume that each of the jumps $X_i$ has density $\alpha x^{\alpha-1}\e^{-x^\alpha}$
for $x>0$ and $0<\alpha<1$. Hence, $\Ftail(x)=\e^{-x^\alpha}$.
In comprehensive simulations studies the proposed estimator outperformed the Asmussen-Kroese (AK) estimator in terms of relative time variance for all values of the parameters $\alpha$ and $\gamma$. The improvement, however, was not uniform, see Table~\ref{tab: weibull}, where, for example for $\alpha=0.1$, we can see savings from as little as  $71$ times to as large as approximately $6000$. The general trend is for large gains for smaller  $\gamma$
and $\alpha>0.6$ or $\alpha<0.3$.  
 The AK estimator was strongest in the range $\alpha\in [0.3,0.6]$
with values for $\alpha\not\in[0.3,0.6]$ rendering it  less efficient compared to \eqref{mod est}.

Note that the AK estimator is much faster to evaluate than \eqref{mod est}, but this speed is insufficient to offset the substantial gains in squared relative error (given by Ratio column).  
\renewcommand{\time}{\mbox{\tiny\textrm{x}}}
\begin{table}[H]
\caption{Comparison of importance sampling method with the AK estimator. Algorithmic parameters were chosen to be $n=10^3,m=10^6,d=10$. The AK estimator is based on $m=10^6$ replications.}
\label{tab: weibull}

\begin{center}
\begin{minipage}{0.45\linewidth}
{\footnotesize
\begin{tabular}{c|c|c|c|c}
\multicolumn{5}{c}{$\alpha=0.1$}  \\
 \hline
$\gamma$&  $\hat\ell$ & Rel. Err. &  Ratio & RTVP  \\
 \hline
             $10^{10}$ & $   4.54/ 10^{4}$&  $ 1.7/10^6$  &  $13^2$        &  71 \\
						$10^{11}$ &  $ 3.40/10^5$  &   $4.1/10^7$  &$22^2$   &    197 \\
 						$10^{12}$ &  $ 1.30/10^6$  &   $6.4/10^8$  &$72^2$   &    2071 \\
						$10^{13}$ & $2.16/10^8$ & $ 8/10^9$ &  $59^2$    &   1429  \\
            $10^{15}$ & $1.84/10^{13}$ & $1.3/10^{10}$ &    $125^2$    &     5944

\end{tabular}
}
\end{minipage}
\qquad
\begin{minipage}{0.45\linewidth}
{\footnotesize
\begin{tabular}{c|c|c|c|c}
\multicolumn{5}{c}{$\alpha=0.2$}  \\
 \hline
$\gamma$&  $\hat\ell$ & Rel. Err. &  Ratio & RTVP  \\
 \hline
             $10^4$ & $  1.97/ 10^{2}$&  $ 6.5/10^5$  & $3^2$ & 3.7\\
						$10^5$ & $  4.64/ 10^{4}$&  $ 1.8/10^5$  & $5.6^2$ & 12\\
					  $10^6$ &  $ 1.31/10^6$ &  $ 3/10^6$  &    $9.2^2$  &   33\\
   $10^7$ &  $ 1.23/10^{10}$  & $4.3/10^7$  &  $10^2$   &    42\\
		$10^8$ & $5.13/10^{17}$ &  $6.5/10^8$ &   $7^2$   &   $20$
\end{tabular}
}
\end{minipage}
%
\begin{minipage}{0.45\linewidth}

{\footnotesize
\begin{tabular}{c|c|c|c|c}
\multicolumn{5}{c}{$\alpha=0.6$}  \\
 \hline
$\gamma$&  $\hat\ell$ & Rel. Err. &  Ratio & RTVP  \\
 \hline
             $10^2$ & $  9.47/ 10^{6}$&  $  2.6/10^4$  & $19^2$ & 130\\
						$150$  & $7.83/10^8$ & $1.5/10^4$ & $41^2$  &   550  \\
						$200$   & $ 1.34/10^9$ & $1.5/10^4$ &  $63^2$    &  1376\\
						 $500$ & $1.83/10^{17}$ &  $ 1.7/10^4$ &  $5.5^2$  &     11\\
 						 $10^3$ & $7.00/10^{27}$ &  $9.5/10^5$ &    $6^2$  &    13
						
\end{tabular}
}
\end{minipage}
\qquad
\begin{minipage}{0.45\linewidth}
{\footnotesize
\begin{tabular}{c|c|c|c|c}
\multicolumn{5}{c}{$\alpha=0.9$}  \\
 \hline
$\gamma$&  $\hat\ell$ & Rel. Err. &  Ratio & RTVP  \\
 \hline
             $30$ & $  1.33/ 10^{4}$&  $ 9/10^4$  & $13^2$ & 50\\
 $40$ & $ 6.27/ 10^{7}$&  $ 9/10^4$  &  $78^2$   &   1758.7 \\
 $50$ & $ 2.25/ 10^{9}$ &  $1/10^3$ & $254^2$    &    17746 \\
 $60$ &  $7.01/10^{12}$ &   $1/10^3$   &$556^2$  &      87103 \\
 $100$&  $4.34/10^{22}$ &  $1/10^3$ &  $ 300^2$  &      23768
\end{tabular}
}

\end{minipage}

\end{center}
\end{table}

\begin{remark}[Efficient evaluation of $\hat g$]
If we define, $c_k\idef  \left(\gamma-\sum_{j\not=i} \bX_{k,j}\right)^+$, then   \eqref{hat g} simplifies to 
\[
\begin{split}
\hat\pi_i(y_i)=\frac{1}{n}\sum_{k=1}^n\pi(y_i\gvn \bX_{k,-i})&=\frac{1}{n}\alpha y_i^{\alpha-1}\e^{-y_i^\alpha}\sum_{k=1}^n  \II\{ y_i \geq c_k\}/\e^{-c_k}=f(x_i) \frac{1}{n}\sum_{k=1}^n  \II\{ y_i \geq c_{(k)}\}\times \e^{c_{(k)}} \;,
\end{split}
\]
where the term $\sum_{k=1}^n  \II\{ y_i \geq c_{(k)}\}\times \e^{c_{(k)}}$ can be  evaluated  for an arbitrary $y_i$ quickly by first computing and storing in memory the cumulative sums $\sum_{k=1}^i   \e^{c_{(k)}},\; i=1,\ldots,n$  and then using table look-up methods with $\bigO(n)$ time complexity.
\end{remark}

\end{example}

\begin{example}[Pareto case]
\label{ex:pareto}
Assume that the jumps $X_i$ have Pareto density and distribution functions given by
$
 f(x)=\alpha/x^{\alpha+1},\; F(x)=1-1/x^{\alpha},
	\; x\ge1.
$
The following table shows the results of a comparison with the AK estimator for different values of $\alpha$ and $\gamma$. Again,  the efficiency gains with the proposed method can be of the order of  $10^4$.
\begin{table}[H]
\caption{Comparison of importance sampling with the AK estimator for Pareto case. Here $n=10^3,m=10^6,d=10$.}
\label{tab: pareto}

\begin{center}
\begin{minipage}{0.45\linewidth}
{\footnotesize
\begin{tabular}{c|c|c|c|c}
\multicolumn{5}{c}{$\alpha=0.5$}  \\
 \hline
$\gamma-d$&  $\hat\ell$ & Rel. Err. &  Ratio & RTVP  \\
 \hline
 $10^{8}$ & $   1.00/ 10^{3}$&  $ 5.6/10^7$  &  $33^2$        &  209 \\
             $10^{10}$ & $   1.00/ 10^{4}$&  $ 5.8/10^8$  &  $107^2$        &  3007 \\
						$10^{11}$ &  $ 3.16/10^5$  &   $1.8/10^8$  &$176^2$   &    6270 \\
 						$10^{12}$ &  $ 9.99/10^6$  &   $5.92/10^9$  &$364^2$   &    34271 \\
						$10^{15}$ & $3.16/10^7$ & $ 1.9/10^{10}$ &  $584^2$    &   82494  
\end{tabular}
}
\end{minipage}
\qquad
\begin{minipage}{0.45\linewidth}
{\footnotesize
\begin{tabular}{c|c|c|c|c}
\multicolumn{5}{c}{$\alpha=1$}  \\
 \hline
$\gamma-d$&  $\hat\ell$ & Rel. Err. &  Ratio & RTVP  \\
 \hline
             $10^4$ & $  1.00/ 10^{3}$&  $  5.1/10^6$  & $7^2$ & 11\\
						$10^6$  & $1.00/10^5$ & $1.0/10^7$ & $38^2$  &   330  \\
						$10^8$   & $ 1.00/10^7$ & $1.4/10^9$ &  $91^2$    &  1711\\
						 $10^{10}$ & $1.00/10^{9}$ &  $ 2.61/10^{11}$ &  $42^2$  &     322\\
 						 $10^{13}$ & $1.00/10^{12}$ &  $3/10^{14}$ &    $24^2$  &    123
						
\end{tabular}
}
\end{minipage}
\begin{minipage}{0.45\linewidth}

{\footnotesize
\begin{tabular}{c|c|c|c|c}
\multicolumn{5}{c}{$\alpha=5$}  \\
 \hline
$\gamma-d$&  $\hat\ell$ & Rel. Err. &  Ratio & RTVP  \\
 \hline      
          $10^1$ & $  2.58/ 10^{4}$&  $ 1.5/10^4$  & $10^2$ & 66\\
             $10^2$ & $  1.06/ 10^{9}$&  $ 1.2/10^5$  & $4^2$ & 11\\
						$10^3$ & $  1.00/ 10^{14}$&  $ 1.13/10^6$  & $4^2$ & 11\\
					  $10^4$ &  $ 1.00/10^{19}$ &  $ 1/10^7$  &    $4.4^2$  &   11\\
   $10^5$ &  $ 1.00/10^{24}$  & $1.2/10^8$  &  $4^2$   &    11
\end{tabular}
}
\end{minipage}
\qquad
\begin{minipage}{0.45\linewidth}
{\footnotesize
\begin{tabular}{c|c|c|c|c}
\multicolumn{5}{c}{$\alpha=10$}  \\
 \hline
$\gamma-d$&  $\hat\ell$ & Rel. Err. &  Ratio & RTVP  \\
 \hline
 $5$ & $  1.75/ 10^{6}$&  $ 2.4/10^4$  & $30^2$ & 609\\
             $10$ & $  1.09/ 10^{9}$&  $ 9.93/10^5$  & $6^2$ & 22\\
 $10^2$ & $ 1.00/ 10^{19}$&  $ 8.8/10^6$  &  $4^2$   &   13 \\
 $500$ & $ 1.02/ 10^{26}$ &  $1.6/10^6$ & $5^2$    &    11 \\
 $1500$ &  $1.73/10^{31}$ &   $5.5/10^7$   &$4.4^2$  &      13 
\end{tabular}
}

\end{minipage}

\end{center}
\end{table}

\end{example}




\begin{example}[Compound Sum]
We are interested in estimating the tail probability of a compound sum of the form
$\Pm(X_1+\cdots+X_R>\gamma)$, where the jumps $X_i$ are iid with Weibull distribution with parameter $0<\alpha<1$, and (without loss of generality) $R\sim \Geo(\rho)$ is a geometric random variable with pdf $\rho(1-\rho)^{r-1},\;r=1,2,\ldots$.
We have $\Pm(S_R>\gamma)=\Pm(X_1+\cdots+X_R>\gamma)=$
\[
\begin{split}
\rho\sum_{r=1}^\infty (1-\rho)^{r-1}\Pm(S_r>\gamma)&=\rho\sum_{r=1}^\infty (1-\rho)^{r-1}\Pm(M_r>\gamma)+\rho\sum_{r=2}^\infty (1-\rho)^{r-1}\Pm(M_r<\gamma,S_r>\gamma)\\
&=\underbrace{\frac{\Ftail(\gamma)}{\Ftail(\gamma)+\rho F(\gamma)}}_{\textrm{dominant term}}+\frac{\rho(1-\rho) (F(\gamma))^2}{\Ftail(\gamma)+\rho F(\gamma)}\underbrace{
\widetilde\Pm\Big(S_R>\gamma\Big)}_{\textrm{residual probability}}\;,
\end{split}
\]
where under the new probability measure $\widetilde \Pm$ we have $(R-1)\sim \Geo(\Ftail(\gamma)+\rho F(\gamma)) $ with pdf $\widetilde\Pm(R=r)=f_R(r),\; r=2,3,\ldots$  and $X_1,X_2,\ldots \simiid f(x)$ with pdf given by the truncated Weibull density $f(x)=\alpha x^{\alpha-1}\e^{-x^\alpha}/(1-\e^{-\gamma^\alpha}),\; 0<x<\gamma$. Hence, we can again  apply our importance sampling estimator  to estimate the residual probability $\widetilde\Pm(S_R>\gamma)$.
The minimum variance pdf for the estimation of the residual is
\[
\pi(\by,r)\varpropto   f_{R}(r)\prod_{j=1}^r f(y_j)\;\II\{S_r>\gamma\},
\]
which can be easily sampled from using the Gibbs sampler in Algorithm~\ref{alg.gibbs} by noting that
\[\
\pi(r\gvn \bY)\varpropto  f_{R}(r)\;\II\{r\geq r^*(\bY)\} ,\quad  r^*(\bY)\idef\min\{r: Y_1+\cdots+Y_r>\gamma\}\;.
\]

Table~\ref{tab: compound} gives the results of a number of numerical experiments.
The results of our proposed method are significantly better in all cases, except $\alpha=0.2$ with $1/\rho\in \{50,100\}$. In the latter case, the variance reduction achieved by  the proposed method is not sufficient to offset the computational cost of simulating compound sums of expected length of $1/\rho$. 
Note that for $\alpha\geq 0.5$, the proposed method can be thousands of times more efficient. Our proposed method is also more efficient than the recently proposed improved Asmussen-Kroese estimator \cite{ghamami2012improving}[Table 2].  For example, based on the reported variances and computing time in \cite{ghamami2012improving},  in terms of RTVP our  estimator is from
$8.5$ to $45$ times more efficient. We must note, however, that  the results given in Table 2 of
\cite{ghamami2012improving} appear to be  incorrect. For example, for $\rho=0.15,\alpha=0.75,\gamma=63.361$ Table 2  reports the estimate $5.23\times 10^{-4}$ with relative error of
$0.4\%$. In contrast, 
we obtained the estimate $5.38\times 10^{-4}$ with relative error $0.03\%$, which we verified with a  Crude Monte Carlo simulation using $10^9$ repetitions. 

\begin{table}[H]
\caption{Compound Weibull sum with expected number of jumps $1/\rho$. Here $n=10^4,m=10^6$.}
\label{tab: compound}

\begin{center}
\begin{minipage}{0.45\linewidth}
{\footnotesize
\begin{tabular}{c|c|c|c|c}
\multicolumn{5}{c}{$\alpha=0.2$ with $\gamma=10^6$ fixed}  \\
 \hline
$1/\rho$&  $\hat\ell$ & Rel. Err. &  Ratio & RTVP  \\
 \hline
 $5$ & $   6.56/ 10^{7}$&  $ 1.4/10^5$  &  $3.6^2$        &  9.6 \\
             $10$ & $   1.31/ 10^{6}$&  $ 3.1/10^5$  &  $2.8^2$        &  3.5 \\
						$20$ &  $ 2.65/10^6$  &   $5.1/10^5$  &$2.2^2$   &    1.2 \\
 						$50$ &  $ 6.81/10^6$  &   $1.7/10^4$  &$1.4^2$   &    0.03 \\
						$100$ & $1.42/10^5$ & $ 1.7/10^{4}$ &  $2^2$    &   0.04  
\end{tabular}
}
\end{minipage}
\qquad
\begin{minipage}{0.45\linewidth}
{\footnotesize
\begin{tabular}{c|c|c|c|c}
\multicolumn{5}{c}{$\alpha=0.5$ with $\gamma=500$ fixed}  \\
 \hline
$1/\rho$&  $\hat\ell$ & Rel. Err. &  Ratio & RTVP  \\
 \hline
		 $3$ & $7.34/10^{10}$ &  $7.3/10^{4}$ &    $4^2$  &   16\\
             $5$ & $  1.60/ 10^{9}$&  $  1/10^3$  & $4.1^2$ & 12\\
						$10$  & $1.17/10^8$ & $1.7/10^3$ & $ 47^2$  &    445  \\
						$20$   & $1.24 /10^5$ & $7.2/10^4$ &  $246^2$    &  7300\\
						 $50$ & $7.9/10^{3}$ &  $ 2.1/10^{4}$ &  $58^2$  &     110
						
\end{tabular}
}
\end{minipage}
\begin{minipage}{0.45\linewidth}

{\footnotesize
\begin{tabular}{c|c|c|c|c}
\multicolumn{5}{c}{$\alpha=0.8$ with  $\gamma=30/\rho$ depending on $\rho$}  \\
 \hline
$1/\rho$&  $\hat\ell$ & Rel. Err. &  Ratio & RTVP  \\
 \hline      
          $3$ & $  6.29/ 10^{11}$&  $ 1.2/10^3$  & $330^2$ & 46000\\
             $5$ & $  1.65/ 10^{11}$&  $ 6.4/10^4$  & $930^2$ & 200000\\
						$10$ & $  6.94/ 10^{12}$&  $ 3.8/10^4$  & $ 2561^2$ &  780000\\
					  $20$ &  $ 4.64/10^{12}$ &  $ 2.7/10^4$  &    $ 3636^2$  &   34000\\
            $50$ &  $ 3.68/10^{12}$  & $2.1/10^4$  &  $ 1485^2$   &     27000
\end{tabular}
}
\end{minipage}
\qquad
\begin{minipage}{0.45\linewidth}
{\footnotesize
\begin{tabular}{c|c|c|c|c}
\multicolumn{5}{c}{$\alpha=0.95$ with  $\gamma=30/\rho$ depending on $\rho$}  \\
 \hline
$1/\rho$&  $\hat\ell$ & Rel. Err. &  Ratio & RTVP  \\
 \hline
 $5$ & $  2.61/ 10^{13}$&  $ 4.8/10^4$  & $10^6$ & $>10^5$\\
 $10$ & $ 2.18/ 10^{13}$&  $ 3/10^4$  &  $>10^6$   &   $>10^5$ \\
 $20$ & $ 2.00/10^{13}$ &  $2.2/10^4$ & $>10^6$     &    40000 \\
 $50$ &  $1.91/10^{13}$ &   $1.9/10^4$   &$>10^6$   &   $>10^5$\\
 $100$ & $  1.88/ 10^{13}$&  $ 1.7/10^4$  & $>10^6$   &   $>10^5$
\end{tabular}
}

\end{minipage}

\end{center}
\end{table}

\end{example}

%
%
%
%
%
%
%

\section{Robustness Properties of Semiparametric Cross Entropy Estimator}
\label{sec:analysis}
In this section we study the robustness properties of the estimator \eqref{semiparam IS} 
when  $\gamma\to\infty$ in some simplified  prototypical settings. Clearly, then $\ell=\ell(\gamma)=\Pm(S(\bX)>\gamma)\to 0$.
We are interested in the behavior of the standard error of the estimator in this regime,
specifically, relative to its mean $\ell$. Since we take a finite constant sample size,
it suffices to analyze the robustness of the single-run estimator of $\ell$:
\begin{equation}\label{est1}
Z = Z(\gamma) = \II\{S(\bX)>\gamma\}\frac{f(\bX)}{g(\bX)},
\end{equation}
where $\bX\sim g(\bx)=\prod_{i=1}^d g_i(x_i)=\prod_{i=1}^d\pi_i(x_i)$.
For our analysis we assume that the importance sampling density $g$ is available.
In practice we estimate $g$ via $\hat g$ from MCMC simulation
as we discussed in Section \ref{subsec:semiparm_IS}. In this respect, our analysis is similar in spirit 
to the one conducted for the parametric Cross Entropy method \cite{chan2011comparison}.
The estimator has bounded relative error if
$\limsup_{\gamma\to\infty} \sqrt{\Var(Z)}/\ell<\infty$, which is equivalent
to having bounded relative second moment \cite{mcis:L'Ecuyer10}:
\[
\limsup_{\gamma\to\infty}\frac{\Em Z^2}{\ell^2}<\infty.
\]

\begin{assumption}
In this section we assume that the jump variables $X_1,\ldots,X_d$
are positive continuous, and that they
are independent and identically distributed random variables
with right-unbounded support.
\end{assumption}

We denote by $F(x)$ the cdf of a jump $X_i$ with associated
pdf $f_1(x)$. Let $\Ftail(x)=1-F(x)$ be the tail cdf,
$F^{*d}$ be the $d$-fold convolution of $F$,
with $\Fdtail=1-F^{*d}$.
Note that the rare-event probability of interest is
$
\ell = \Pm(X_1+\cdots+X_d>\gamma) = \Fdtail(\gamma).
$
Furthermore, the $i$-th marginal $\pi_i$ of the zero-variance pdf can be rewritten as
\begin{align*}
\pi_i (x_i)& = \int_{\R^{d-1}_{>0}}\pi(\bx)\,\di x_1\cdots \di x_{i-1}\di x_{i+1}\cdots \di x_d\\
&= \int_{\R^{d-1}_{>0}}\frac{\II\{S(\bx)>\gamma\}\,f(\bx)}{\ell}\,
\di x_1\cdots \di x_{i-1}\di x_{i+1}\cdots \di x_d\\
&=\int_{\R^{d-1}_{>0}}\frac{\II\{S(\bx)>\gamma\}\,\prod_{j=1}^d f_1(x_j)}{\ell}\,
\di x_1\cdots \di x_{i-1}\di x_{i+1}\cdots \di x_d\\
&=\frac{f_1(x_i)}{\ell}\int_{\R^{d-1}_{>0}} \II\{x_1+\cdots+x_d>\gamma\}\,\prod_{j\neq i}f_1(x_j)\,
\di x_1\cdots \di x_{i-1}\di x_{i+1}\cdots \di x_d\\
&=\frac{f_1(x_i)}{\ell}\,\Pm(X_1+\cdots+X_{i-1}+X_{i+1}+\cdots+X_d>\gamma-x_i)=
\frac{f_1(x_i)\,\Fdmtail(\gamma-x_i)}{\Fdtail(\gamma)}\;.
\end{align*}

\noindent
Note that for $x_i>\gamma$ we clearly have $\Fdmtail(\gamma-x_i)=1$,
and thus $\pi_i(x_i)=f_1(x_i)/\ell$.
Hence, the single-run estimator $Z$ 
can be written as
\begin{equation}
\label{est1x}
\begin{split}
Z &= \II\{S(\bX)>\gamma\}\frac{f(\bX)}{g(\bX)}
=\II\{S(\bX)>\gamma\}\,\prod_{i=1}^d\frac{f_1(X_i)}{\pi_i(X_i)}= \II\{S(\bX)>\gamma\}\,\prod_{i=1}^d \frac{\Fdtail(\gamma)}{\Fdmtail(\gamma-X_i)}
\end{split}
\end{equation}
Finally, using
$
\Em Z^2=\Em_g Z^2=\Em_g Zf(\bX)/g(\bX)=\Em_f Z,
$
we get for the second moment of estimator $Z$:
\begin{equation}
\label{e:EZ2}
\Em Z^2 = \Em_f\II\{S(\bX)>\gamma\}\,
\prod_{i=1}^d \frac{\Fdtail(\gamma)}{\Fdmtail(\gamma-X_i)}.
\end{equation}

\begin{proposition}
Suppose that the jumps $X_1,\ldots,X_d$ are i.i.d.\ with a light-tailed or 
a subexponential
Weibull or Pareto distribution. Then, the semiparametric importance sampling estimator \eqref{est1}
is at least logarithmically efficient as $\gamma\to\infty$. 
\end{proposition}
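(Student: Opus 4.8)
The plan is to bound the second moment $\Em Z^2$ from \eqref{e:EZ2} by splitting the expectation according to which jump is the largest, exactly as in the Asmussen--Kroese decomposition. Write $\Em Z^2 = \sum_{i=1}^d \Em_f\II\{S(\bX)>\gamma, X_i = M_d\}\prod_{j=1}^d \Fdtail(\gamma)/\Fdmtail(\gamma - X_j)$, which by the i.i.d.\ assumption equals $d\,\Em_f\II\{S(\bX)>\gamma, X_d = M_d\}\prod_{j=1}^d \Fdtail(\gamma)/\Fdmtail(\gamma - X_j)$. On the event $\{X_d = M_d, S(\bX)>\gamma\}$ one has $X_d > \gamma/d$, so if $\gamma$ is large enough that $\gamma/d$ dominates, the ratios $\Fdtail(\gamma)/\Fdmtail(\gamma - X_j)$ for $j<d$ are each bounded. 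The key analytic estimates needed are: (i) a uniform upper bound $\Fdtail(\gamma)/\Fdmtail(\gamma - x) \le C$ for $x$ in a suitable range and $\gamma$ large; and (ii) control of the single remaining factor $\Fdtail(\gamma)/\Fdmtail(\gamma - X_d)$ when $X_d$ is itself close to $\gamma$, where the ratio can be as small as $\Fdtail(\gamma)$ but is multiplied by the density $f_1(X_d)$ which is then very small.

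First I would treat the light-tailed case. Here $\Ftail$ decays at least exponentially, and the standard subexponential-type asymptotics fail; instead one uses that $\Fdtail(\gamma) = \Pm(S_d > \gamma)$ and $\Fdmtail(\gamma - x) = \Pm(S_{d-1} > \gamma - x)$, and bounds the product by comparing $-\log \Fdtail$ to the Legendre transform (large-deviation rate) of the jump distribution. Logarithmic efficiency only requires $\log \Em Z^2 / \log \ell^2 \to 1$, i.e. $\limsup (\log \Em Z^2)/(2\log\ell) \ge 1$ together with the trivial $\Em Z^2 \ge \ell^2$; so it suffices to show $\Em Z^2 \le \ell^{2 - o(1)}$. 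Using $\Em Z^2 \le d\,\Em_f[\prod_j (\Fdtail(\gamma)/\Fdmtail(\gamma-X_j)); S(\bX)>\gamma]$ and the crude bound $\Fdmtail(\gamma - X_j) \ge \Ftail(\gamma)^{\,d-1}$ pointwise (since $\Fdmtail(y) \ge \Pm(\text{all of } X_1,\dots,X_{d-1} > y/(d-1))^{1/1}$... more simply $\Fdmtail(y) \ge \Ftail(y)$ for $y$ large, giving a weaker but sufficient bound), reduces matters to showing $\Fdtail(\gamma)^{d}/\Ftail(\gamma)^{d(d-1)} \cdot \Pm(S_d>\gamma) = \ell^{2-o(1)}$, which follows from the exponential-decay assumption by taking logarithms and using that $\log\Fdtail(\gamma) \sim \log\Ftail(\gamma)$ up to the rate-function normalization in the light-tailed i.i.d.\ case --- i.e. $\log\Pm(S_d>\gamma) = -\gamma I(\text{const}) + o(\gamma)$ with the same linear rate as $d\log\Ftail(\gamma)$ only up to constants, so the exponents stay within a $1+o(1)$ factor. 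I would invoke the Bahadur--Rao / Cramér estimates to make the $o(\gamma)$ bookkeeping precise.

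For the subexponential Weibull ($\Ftail(x) = e^{-x^\alpha}$, $0<\alpha<1$) and Pareto ($\Ftail(x) = x^{-\alpha}$) cases I would use the classical single-big-jump principle: $\Fdtail(\gamma) \sim d\,\Ftail(\gamma)$ and $\Fdmtail(\gamma - x) \sim (d-1)\Ftail(\gamma - x)$ as $\gamma\to\infty$, uniformly for $x$ in compact sets, together with the long-tailedness $\Ftail(\gamma - x)/\Ftail(\gamma) \to 1$. Plugging into $Z$ from \eqref{est1x}, on the dominant event where exactly one $X_i$ is large and the others are $O(1)$, the product telescopes to something of order $\Ftail(\gamma)^{d} / \Ftail(\gamma)^{d-1} = \Ftail(\gamma) \asymp \ell$ times bounded factors, so $\Em Z^2 = O(\ell^{2})$ in the Pareto case (yielding bounded relative error, stronger than claimed) and $\ell^{2-o(1)}$ in the Weibull case where the uniformity in $x$ over the growing region $x \lesssim \gamma$ is only $h(x)$-insensitive rather than uniform --- this is where the $o(1)$ in the exponent is spent. \emph{The main obstacle} will be precisely this last uniformity issue for Weibull: the ratio $\Ftail(\gamma - x)/\Ftail(\gamma) = e^{-(\gamma-x)^\alpha + \gamma^\alpha}$ blows up when $x$ is comparable to $\gamma$, so I cannot bound the product factors uniformly. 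The fix is to restrict to $X_d \le \gamma(1 - \delta)$ on one piece (where $\gamma - X_d \to\infty$ and the ratio is controlled, using $\gamma^\alpha - (\gamma-x)^\alpha \le \alpha \gamma^{\alpha-1} x \to 0$ only for $x = o(\gamma^{1-\alpha})$ — so actually one needs a finer three-way split: $X_d$ small, $X_d$ moderate, $X_d$ near $\gamma$), and on the complementary piece $X_d > \gamma(1-\delta)$ bound the integral directly using $f_1(X_d) \le \alpha X_d^{\alpha - 1} e^{-X_d^\alpha} \le \alpha (\gamma(1-\delta))^{\alpha-1} e^{-(\gamma(1-\delta))^\alpha}$ and absorb the resulting exponent into the $o(\log\ell)$ budget. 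Carrying out this split carefully, keeping track that every discarded exponential is $e^{-\Theta(\gamma^\alpha) \cdot \epsilon}$ which is $\ell^{\Theta(\epsilon)}$, and then letting $\epsilon\downarrow 0$, delivers logarithmic efficiency.
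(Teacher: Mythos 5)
Your proposed decomposition --- splitting by which jump is the largest --- is a reasonable starting point, but it is not the cut the paper makes, and the cut matters. The paper splits on the event $\{M_d>\gamma\}$ versus $\{M_d\le\gamma\}$: on the first piece the analysis is almost trivial (at least one denominator factor $\Fdmtail(\gamma-X_j)$ equals $1$, and Kesten's bound finishes it in a few lines, giving \emph{bounded} relative error for that piece regardless of the distribution). All the work is in the second piece. Your split does not isolate this hard region: on $\{X_d=M_d, S>\gamma\}$ you only get $X_d>\gamma/d$, not $X_d>\gamma$, and your assertion that ``the ratios $\Fdtail(\gamma)/\Fdmtail(\gamma-X_j)$ for $j<d$ are each bounded'' is exactly the hard claim that needs a proof; it does not follow from $X_d>\gamma/d$, because the other coordinates can still range over $[0,\gamma]$.

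The genuine gaps are in each of the three cases.

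\textbf{Light-tailed.} You invoke Cram\'er / Bahadur--Rao. The paper instead works in the Embrechts--Goldie class $\mathcal{L}(\theta)$, which requires only $\Ftail(\gamma+x)/\Ftail(\gamma)\to\e^{-\theta x}$, i.e.\ $\Ftail(\gamma)=\e^{-\theta\gamma}h(\gamma)$ with $h$ long-tailed. This is strictly weaker than what Cram\'er needs: the moment generating function exists only up to $\theta$, $\gamma/d\to\infty$ leaves the Cram\'er regime, and the long-tailed factor $h$ can swamp any polynomial correction Bahadur--Rao would give. The paper avoids large deviations entirely: it uses closure of $\mathcal{L}(\theta)$ under convolution, Lemma \ref{lem4} (ratios of convolution tails are $o(\e^{\epsilon\gamma})$), and Assumption A (the running sup of $h$ is subexponential) to bound $\Em Z^2/\ell^{2-\epsilon}$ by $H(\gamma)G^d(\gamma)\e^{-\epsilon\theta\gamma}/h_d(\gamma)\to 0$. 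Your route would need you to first restrict to a subclass where the MGF is steep and then re-derive the estimates; as written it does not cover the generality the statement requires.

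\textbf{Weibull.} You correctly identify the obstacle --- $\Ftail(\gamma-x)/\Ftail(\gamma)=\e^{\gamma^\alpha-(\gamma-x)^\alpha}$ blows up when $x\asymp\gamma$ --- but the ``three-way split'' is not carried out and will not close easily. The difficulty is that on $\{M_d<\gamma,\,S>\gamma\}$ the \emph{product} of the exponential factors can be enormous even when each individual factor is moderate, and a crude truncation costs you more in the exponent than the $o(\log\ell)$ budget. The paper resolves this exactly: after the change of variables $u_i=x_i/\gamma$ it rewrites the offending expectation as a Laplace-type integral $\gamma^{d\alpha}\int_{\mathscr D}h(\bu)\e^{-\gamma^\alpha\phi(\bu)}\dd\bu$, verifies that $\phi$ attains a \emph{strictly positive} minimum $\phi(\bu^*)$ on the compact closure at the symmetric boundary point $\bu^*=(1/d,\ldots,1/d)$, checks the nondegeneracy conditions, and applies the boundary-point Laplace expansion to obtain $\bigO\big(\gamma^{\alpha(d-1)/2}\e^{-\gamma^\alpha\phi(\bu^*)}\big)\to 0$. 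This gives that the second term in \eqref{decomp} actually \emph{vanishes} (hence, combined with Lemma \ref{lem:firstterm}, bounded relative error for Weibull), which is stronger than the mere $\ell^{2-o(1)}$ you were aiming for; but you cannot get there by the ad hoc truncation you describe.

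\textbf{Pareto.} You claim $\Em Z^2=O(\ell^2)$, i.e.\ bounded relative error. The paper does not prove this and indeed only establishes logarithmic efficiency via a much more delicate argument: it reduces the second piece to a sum of $d-1$ multiple integrals $H_n(\gamma)=\alpha^n\gamma^{-n\alpha}I_n(\gamma,1)$, derives the recursion \eqref{e:Inderivative} for $\partial_\gamma I_n$ (Lemma \ref{LemmaI}), and runs an induction on $n$ via L'Hopital to show $I_n(\gamma,1)=o(\gamma^{\alpha(n-2)}\log\gamma)$. That $\log\gamma$ factor is precisely what stops you from concluding bounded relative error, and it is not an artefact --- the ``many moderate jumps'' region is genuinely not negligible in the Pareto case at the level of the second moment divided by $\ell^2$. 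Your single-big-jump heuristic is telescoping the product to $\Ftail(\gamma)^d/\Ftail(\gamma)^{d-1}$ in a region that does not dominate the second-moment integral.

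In short: you have correctly identified where the difficulties live, but the proof requires the three technical lemmas the paper develops (the Laplace expansion, the $I_n$ recursion plus L'Hopital, and the $\mathcal{L}(\theta)$-class machinery), and your proposed substitutes either do not cover the stated generality (light-tailed), are not carried out (Weibull), or claim more than is true (Pareto).
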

In the subsequent sections we prove this result by considering the heavy- and light-tailed cases separately. 

\subsection{Heavy-tailed case}
\label{HeavySection}
In this section we assume that all jumps $X_i$ are drawn from a subexponential distribution $F$
satisfying (for all integer $d$)
\begin{equation}
\lim_{\gamma\uparrow\infty}\frac{\Fdtail(\gamma)}{\Ftail(\gamma)}=d.
\end{equation}
In the sequel we shall frequently use the trivial property
\begin{equation}\label{e:trivial}
\Fdtail(x) \geq \Ftail(x),\qquad x\ge0.
\end{equation}
Furthermore, we shall need Kesten's bound Lemma 1.3.5(c) in \cite{EMB97},
which states that for every $\epsilon>0$ there exists a constant $c_1$ such that for all $d\geq 2$
\begin{equation}
\label{kesten}
\Fdtail(x)\leq c_1(1+\epsilon)^d \Ftail(x),\qquad x\ge0.
\end{equation}

Denoting the maximum  $M_d= \max_{i\leq d} X_i$,
we can decompose the relative second moment as follows:
\begin{equation}
\label{decomp}
\frac{\Em Z^2}{\ell^2} =
\frac{\Em \II\{ M_d>\gamma\}\,Z^2}{\ell^2}
+ \frac{\Em \II\{ M_d\leq \gamma\}\,Z^2}{\ell^2}.
\end{equation}
In Lemma \ref{lem:firstterm} we shall prove
that the first term is bounded as $\gamma\to\infty$.
Concerning the second term,
we examine its behavior for various common probability models
in the next two sections.

%

\begin{lemma}\label{lem:firstterm}
\[
\limsup_{\gamma\to\infty} \frac{\Em \II\{ M_d>\gamma\}\,Z^2}{\ell^2}<\infty.
\]
\end{lemma}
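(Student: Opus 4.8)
The plan is to bound the restricted second moment pointwise and then integrate, using only the subexponential tail relation and the trivial inequality~\eqref{e:trivial}; Kesten's bound~\eqref{kesten} is not needed for this term. The change of measure behind~\eqref{e:EZ2} (i.e.\ $\Em Z^2=\Em_f Z$) localizes to events, so $\Em\,\II\{M_d>\gamma\}\,Z^2=\Em_f\!\left[\II\{M_d>\gamma\}\,Z\right]$. Since $S(\bX)\geq M_d$ we have $\{M_d>\gamma\}\subseteq\{S(\bX)>\gamma\}$, and hence, using~\eqref{est1x}, on the event $\{M_d>\gamma\}$ one has $Z=\Fdtail(\gamma)^d\prod_{i=1}^d \Fdmtail(\gamma-X_i)^{-1}$, so that $\Em\,\II\{M_d>\gamma\}\,Z^2=\Fdtail(\gamma)^d\,\Em_f\!\left[\II\{M_d>\gamma\}\prod_{i=1}^d \Fdmtail(\gamma-X_i)^{-1}\right]$.

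The key step is a deterministic bound on the product on $\{M_d>\gamma\}$. Let $j$ be an index with $X_j=M_d$ (a.s.\ unique, since the $X_i$ are continuous). Then $X_j>\gamma$, so $\gamma-X_j<0$ and $\Fdmtail(\gamma-X_j)=\Pm(X_1+\cdots+X_{d-1}>\gamma-X_j)=1$, because the sum of the positive jumps is a.s.\ nonnegative. For every remaining index $i\neq j$, monotonicity of $\Fdmtail$ together with $X_i>0$ gives $\Fdmtail(\gamma-X_i)\geq \Fdmtail(\gamma)\geq \Ftail(\gamma)$, the last inequality being the trivial property~\eqref{e:trivial} applied to the $(d-1)$-fold convolution. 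Multiplying these $d-1$ bounds with the unit factor at $j$ yields $\prod_{i=1}^d \Fdmtail(\gamma-X_i)^{-1}\leq \Ftail(\gamma)^{-(d-1)}$ on $\{M_d>\gamma\}$, hence $\Em\,\II\{M_d>\gamma\}\,Z^2\leq \Pm(M_d>\gamma)\,\Fdtail(\gamma)^d\,\Ftail(\gamma)^{-(d-1)}$.

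It remains to use $\Pm(M_d>\gamma)=1-F(\gamma)^d\leq d\,\Ftail(\gamma)$ and divide by $\ell^2=\Fdtail(\gamma)^2$, which gives $\Em\,\II\{M_d>\gamma\}\,Z^2/\ell^2\leq d\,\bigl(\Fdtail(\gamma)/\Ftail(\gamma)\bigr)^{d-2}$. Letting $\gamma\to\infty$ and invoking the subexponential relation $\Fdtail(\gamma)/\Ftail(\gamma)\to d$ yields $\limsup_{\gamma\to\infty}\Em\,\II\{M_d>\gamma\}\,Z^2/\ell^2\leq d^{\,d-1}<\infty$, which is the claim.

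The argument is essentially routine; the only points that require care are (i) noticing that the factor corresponding to the largest jump equals exactly $1$ rather than merely being bounded — this is what turns the naive exponent $d$ in the product bound into $d-1$ and makes the estimate work after multiplying by $\Pm(M_d>\gamma)\asymp\Ftail(\gamma)$ — and (ii) verifying $\Fdmtail(\gamma-X_i)\geq\Ftail(\gamma)$ for the remaining factors. No genuine obstacle is expected here; the real work in the Proposition lies in controlling the complementary term $\Em\,\II\{M_d\leq\gamma\}\,Z^2/\ell^2$ in~\eqref{decomp}, handled separately for the various tail models.
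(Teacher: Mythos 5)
Your proof is correct, and it follows the paper's argument in all essentials: rewrite $\Em\,\II\{M_d>\gamma\}\,Z^2$ under $\Pm_f$ via $\Em Z^2=\Em_f Z$, observe that on $\{M_d>\gamma\}$ exactly one factor $\Fdmtail(\gamma-X_j)$ equals $1$ while the remaining $d-1$ are bounded below by $\Fdmtail(\gamma)\geq\Ftail(\gamma)$, and then divide by $\ell^2=\big(\Fdtail(\gamma)\big)^2$.

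The one genuine difference is in how the final ratio is controlled. The paper uses $\Pm(M_d>\gamma)\leq\Fdtail(\gamma)$, arrives at $\big(\Fdtail(\gamma)/\Ftail(\gamma)\big)^{d-1}$, and then appeals to Kesten's bound~\eqref{kesten} to get a uniform constant $c_1^{d-1}(1+\epsilon)^{d(d-1)}$. You instead use the elementary union bound $\Pm(M_d>\gamma)\leq d\,\Ftail(\gamma)$, which leads to $d\big(\Fdtail(\gamma)/\Ftail(\gamma)\big)^{d-2}$, and you then invoke only the assumed subexponential limit $\Fdtail(\gamma)/\Ftail(\gamma)\to d$ to conclude $\limsup\leq d^{\,d-1}$. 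Your route is slightly lighter: since the $\limsup$ is taken as $\gamma\to\infty$, the asymptotic relation already in force suffices, and Kesten's bound (which gives a uniform-in-$\gamma$ inequality) is not actually needed for this term. It also gives an explicit, cleaner bound $d^{\,d-1}$ in place of the paper's $\epsilon$-dependent constant. Either way the conclusion is the same; you correctly identify that the heavy lifting in the Proposition lies in the complementary term.
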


\begin{proof}
Since
$\II\{S(\bx)>\gamma\}\leq 1$,
we use \eqref{e:EZ2} to find
\begin{equation}\label{e:EZ2andMgg}
\Em \II\{ M_d>\gamma\}\,Z^2 \leq \Em_f\II\{ M_d>\gamma\}
\prod_{i=1}^d \frac{\Fdtail(\gamma)}{\Fdmtail(\gamma-X_i)}
\end{equation}
Then observe that, if $M_d>\gamma$, there exists at least one
jump $X_j>\gamma$, and, hence, that there is at least
one $j$ for which $\Fdmtail(\gamma-X_j)=1$.
For all other jumps it holds trivially
$\Fdmtail(\gamma-X_i)\geq \Fdmtail(\gamma)$, thus
it follows that \eqref{e:EZ2andMgg} is bounded from above by
\begin{align*}
\frac{\Em_f \II\{ M_d>\gamma\}\prod_{i=1}^d \Fdtail(\gamma)}%
{\prod_{i\neq j}^d \Fdmtail(\gamma-X_i)}
&\leq \Em_f \II\{ M_d>\gamma\} \;\frac{\prod_{i=1}^d \Fdtail(\gamma)}%
{\prod_{i\neq j}^d \Fdmtail(\gamma)}\\
&\quad =\Pm_f(M_d>\gamma) \; \frac{\big(\Fdtail(\gamma)\big)^d}{\big(\Fdmtail(\gamma)\big)^{d-1}}\\
&\quad\leq \frac{\big(\Fdtail(\gamma)\big)^{d+1}}{\big(\Fdmtail(\gamma)\big)^{d-1}}
=\big(\Fdtail(\gamma)\big)^2\;
\Big(\frac{\Fdtail(\gamma)}{\Fdmtail(\gamma)}\Big)^{d-1},
\end{align*}
where the last inequality follows from $\Pm_f(M_d>\gamma)\leq \Pm_f(S(\bX)>\gamma)=\Fdtail(\gamma)$.
Now we use the bounds \eqref{e:trivial} and \eqref{kesten} for
\[
\Big(\frac{\Fdtail(\gamma)}{\Fdmtail(\gamma)}\Big)^{d-1}
\leq \Big(\frac{\Fdtail(\gamma)}{\Ftail(\gamma)}\Big)^{d-1}
\leq c_1^{d-1}(1+\epsilon)^{d(d-1)}.
\]
Collecting all bounds we obtain
\begin{equation}\label{e:boundedfirstterm}
\begin{split}
& \frac{\Em \II\{ M_d>\gamma\}\,Z^2}{\ell^2}
= \frac{1}{\big(\Fdtail(\gamma)\big)^2} \, \Em_f\II\{ M_d>\gamma\}
\prod_{i=1}^d \frac{\Fdtail(\gamma)}{\Fdmtail(\gamma-X_i)}\\
& \quad \leq c_1^{d-1}(1+\epsilon)^{d(d-1)}<\infty.
\end{split}
\end{equation}
\halmos
\end{proof}
Since we have bounded relative error
for the first term in \eqref{decomp}, then we can at most have 
bounded relative error for estimator \eqref{est1}. 
For example, if the second term in \eqref{decomp}
 vanishes or is bounded, then \eqref{est1} has bounded relative error.

\subsubsection{Weibull distribution}
As in Example~\ref{ex:weibull}, 
here we assume that each of the jumps $X_i$ have density $\alpha x^{\alpha-1}\e^{-x^\alpha}$ for  $0<\alpha<1$.
The purpose is to analyze the second term in \eqref{decomp}.
\begin{lemma}\label{lem:secondtermweibull}
\[
\limsup_{\gamma\to\infty} \frac{\Em \II\{ M_d<\gamma\}\,Z^2}{\ell^2}=0.
\]
\end{lemma}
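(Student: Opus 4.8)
The plan is to bound the integrand $\prod_{i=1}^d \Fdtail(\gamma)/\Fdmtail(\gamma-X_i)$ on the event $\{M_d<\gamma\}$ and show the resulting expression, divided by $\ell^2 = (\Fdtail(\gamma))^2$, goes to zero. On $\{M_d<\gamma\}$ every argument $\gamma-X_i$ is positive, so each factor $\Fdmtail(\gamma-X_i)$ is a genuine tail probability strictly less than $1$; the key is that $\Fdmtail$ decays subexponentially, so for Weibull tails $\Ftail(x)=\e^{-x^\alpha}$ with $0<\alpha<1$ one has (via the subexponential property, or directly) $\Fdmtail(\gamma-x)\sim (d-1)\,\e^{-(\gamma-x)^\alpha}$ as $\gamma\to\infty$, uniformly for $x$ in compact sets, and more generally one can lower-bound $\Fdmtail(\gamma-x)\geq c\,\e^{-(\gamma-x)^\alpha}$ for a constant $c>0$. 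Since $\Fdtail(\gamma)\le c_1(1+\epsilon)^d\e^{-\gamma^\alpha}$ by Kesten's bound \eqref{kesten}, each ratio satisfies
\[
\frac{\Fdtail(\gamma)}{\Fdmtail(\gamma-X_i)}\le C\,\e^{-\gamma^\alpha+(\gamma-X_i)^\alpha}
\]
for a constant $C$ depending only on $d$ and $\epsilon$.

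The next step is to exploit concavity of $x\mapsto x^\alpha$: for $0<\alpha<1$ and $0\le X_i\le\gamma$ we have $(\gamma-X_i)^\alpha \le \gamma^\alpha - \alpha\gamma^{\alpha-1}X_i$ (tangent-line bound at $\gamma$), hence $-\gamma^\alpha+(\gamma-X_i)^\alpha \le -\alpha\gamma^{\alpha-1}X_i$. Therefore, on $\{M_d<\gamma\}$,
\[
\prod_{i=1}^d\frac{\Fdtail(\gamma)}{\Fdmtail(\gamma-X_i)}\le C^d\,\exp\Big(-\alpha\gamma^{\alpha-1}\sum_{i=1}^d X_i\Big)\le C^d\,\exp\big(-\alpha\gamma^{\alpha-1}\gamma\big)=C^d\e^{-\alpha\gamma^\alpha},
\]
where the last inequality uses $S(\bX)>\gamma$ from the indicator in \eqref{e:EZ2}. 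Dividing by $\ell^2=(\Fdtail(\gamma))^2\ge(\Ftail(\gamma))^2=\e^{-2\gamma^\alpha}$ gives an upper bound of order $C^d\,\e^{(2-\alpha)\gamma^\alpha}$, which unfortunately blows up — so the crude tangent-line bound is too lossy and must be refined.

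The fix, which I expect to be the main obstacle, is to not throw away the sharper information near the "critical" configuration. The sharp picture is that under $f$ restricted to $\{S(\bX)>\gamma, M_d<\gamma\}$ the dominant contribution comes from all $X_i$ being of order $\gamma/d$, so one should keep the exact exponent: on $\{M_d<\gamma, S(\bX)>\gamma\}$, $\Em Z^2/\ell^2$ is controlled by $\e^{2\gamma^\alpha}\,\Em_f\big[\II\{S>\gamma, M_d<\gamma\}\,C^d\exp(\sum_i(\gamma-X_i)^\alpha - d\gamma^\alpha)\big]$ up to polynomial factors, and one analyses $\sup\{\sum_{i=1}^d(\gamma-x_i)^\alpha : x_i\ge0,\ \sum x_i>\gamma,\ x_i<\gamma\}$. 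By strict concavity this supremum is attained at $x_i=\gamma/d$ for all $i$, giving value $d\,\gamma^\alpha(1-1/d)^\alpha = d(1-1/d)^\alpha\,\gamma^\alpha$; since $(1-1/d)^\alpha < 1-\alpha/d$ for $0<\alpha<1$, we get $\sum_i(\gamma-x_i)^\alpha - d\gamma^\alpha < (d(1-1/d)^\alpha - d)\gamma^\alpha =: -\kappa\gamma^\alpha$ with $\kappa>0$, but $\kappa$ may be smaller than $2$, so one still loses. The genuinely correct route is therefore a large-deviations / Laplace-method estimate: write $\Em Z^2/\ell^2 \le (\Fdtail(\gamma))^{-2}\int_{\{S>\gamma, M_d<\gamma\}}\prod_i\big[f_1(x_i)\Fdtail(\gamma)/\Fdmtail(\gamma-x_i)\big]\,d\bx$, substitute $f_1(x)=\alpha x^{\alpha-1}\e^{-x^\alpha}$ and $\Fdmtail(\gamma-x_i)\asymp(d-1)\e^{-(\gamma-x_i)^\alpha}$, and show the total exponent $\sum_i\big(-x_i^\alpha + (\gamma-x_i)^\alpha\big) + 2\gamma^\alpha$ is bounded above by $-\delta\gamma^\alpha$ for some $\delta>0$ over the feasible region, by a direct optimisation (Lagrange multipliers) showing the maximiser has all $x_i$ equal, reducing to a one-variable inequality $-t^\alpha+(\gamma-t)^\alpha+2\gamma^\alpha/d<0$ at $t=\gamma/d$, i.e. $2/d<1-(1-1/d)^\alpha + d^{-\alpha}$ — wait, this needs $d\ge 2$ and the cross-over must be checked — the cleanest finish is to bound the integral by its volume times the max integrand, absorb polynomial-in-$\gamma$ factors, and conclude the ratio is $O(\e^{-\delta\gamma^\alpha})\to 0$. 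So the structure of the proof is: (i) reduce to the integral form of $\Em Z^2$; (ii) plug in asymptotics for $\Ftail$, $\Fdtail$, $\Fdmtail$ with uniform constants; (iii) identify the total exponent as a concave function on the simplex-like region $\{x_i\ge0,\ \sum x_i>\gamma,\ x_i<\gamma\}$ and maximise it; (iv) verify the maximum exponent is strictly negative (order $-\gamma^\alpha$), the crucial inequality being strict concavity of $t\mapsto t^\alpha$; (v) conclude the limsup is $0$.
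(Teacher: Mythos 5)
Your overall strategy---reduce to an integral, change variables $u_i=x_i/\gamma$, identify a Laplace-type integral, and show the exponent is strictly negative at the maximizer $\bu^*=(1/d,\ldots,1/d)$ on the simplex boundary---is exactly the route the paper takes. But the execution has a concrete bookkeeping error that breaks the argument.

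When you write out the integral as $(\Fdtail(\gamma))^{-2}\int \prod_i\bigl[f_1(x_i)\,\Fdtail(\gamma)/\Fdmtail(\gamma-x_i)\bigr]\,\di\bx$ and then ``substitute $f_1(x)=\alpha x^{\alpha-1}\e^{-x^\alpha}$ and $\Fdmtail(\gamma-x_i)\asymp (d-1)\e^{-(\gamma-x_i)^\alpha}$'', you never substitute the $d$ factors of $\Fdtail(\gamma)\asymp d\,\e^{-\gamma^\alpha}$. Those contribute $-d\gamma^\alpha$ to the exponent, which you dropped. Your stated exponent $\sum_i\bigl(-x_i^\alpha+(\gamma-x_i)^\alpha\bigr)+2\gamma^\alpha$ should be $\sum_i\bigl(-x_i^\alpha+(\gamma-x_i)^\alpha\bigr)-(d-2)\gamma^\alpha$; this is the $-\gamma^\alpha\phi(\bu)$ of the paper, with $\phi(\bu)=d-2+\sum_i(u_i^\alpha-(1-u_i)^\alpha)$. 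Evaluated at $u_i=1/d$ the missing term is decisive: your version gives $\gamma^\alpha\bigl[d^{1-\alpha}(d-1)^\alpha-d^{1-\alpha}+2\bigr]$, which is \emph{positive} for typical $(d,\alpha)$ (e.g.\ $d=10$, $\alpha=0.5$ gives roughly $+1.16\,\gamma^\alpha$), whereas the correct exponent is $-\gamma^\alpha\phi(\bu^*)=-\gamma^\alpha\bigl[d-2+d^{1-\alpha}-d^{1-\alpha}(d-1)^\alpha\bigr]$, which is strictly negative for $d>2$. Your own remark ``this needs $d\ge 2$ and the cross-over must be checked'' and the earlier observation that ``$\kappa$ may be smaller than $2$, so one still loses'' are symptoms of this dropped factor: once it is restored the inequality you need simply becomes $\phi(\bu^*)>0$, which holds.

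A secondary, more technical gap: you propose to finish by ``bounding the integral by its volume times the max integrand''. This does not work here because the prefactor $h(\bu)=\prod_i u_i^{\alpha-1}$ is unbounded on $\mathscr{D}$ (it blows up as any $u_i\to 0$, since $\alpha-1<0$), so there is no finite ``max integrand''. The paper instead invokes a genuine Laplace-type asymptotic expansion at a boundary non-critical point (it checks $\nabla\phi\neq 0$ at $\bu^*$ and that the restricted Hessian is non-degenerate), yielding $\idotsint_{\mathscr{D}} h\,\e^{-\gamma^\alpha\phi}\,\di\bu=\bigO\bigl(\gamma^{-\alpha(d+1)/2}\e^{-\gamma^\alpha\phi(\bu^*)}\bigr)$. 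The polynomial factor is not needed for the conclusion, but some rigorous version of the Laplace estimate is, and a crude volume bound is not available. So: restore the $\Fdtail(\gamma)^d$ factor in the exponent, and replace the max-times-volume step with a Laplace/Watson-type boundary estimate, and your argument coincides with the paper's.
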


\begin{proof}
Denote $S_d=S(\bX)$.
Using \eqref{e:EZ2} and $\ell=\Fdtail(\gamma)$,
we get
\begin{align*}
& \frac{\Em \II\{ M_d<\gamma\}\,Z^2}{\ell^2}=
\Em_f  \II\{M_d< \gamma, S_d>\gamma\}\,\frac{\prod_{i=1}^{d-2} \Fdtail(\gamma)}%
{\prod_{i=1}^{d} \Fdmtail(\gamma-X_i)}.
\end{align*}
From the bounds \eqref{e:trivial} and \eqref{kesten},
we obtain that this expression can be bounded above by
\begin{align*}
\Em_f & \II\{M_d< \gamma, S_d>\gamma\}\,
\frac{\prod_{i=1}^{d-2} c_1(1+\epsilon)^d \,\Ftail(\gamma)}%
{\prod_{i=1}^{d} \Ftail(\gamma-X_i)}\\
&= c_2\Em_f  \II\{M_d<\gamma ,S_d>\gamma\}\,
\exp\Big(-(d-2)\gamma^\alpha+\sum_{i=1}^d(\gamma-X_i)^\alpha\Big).
\end{align*}
We now consider the following integral over the region
$\{\bx: 0<x_i<\gamma, \sum_i x_i>\gamma\}$:
\begin{align*}
& \Em_f  \II\{M_d<\gamma, S_d>\gamma\}\,
\exp\Big(-(d-2)\gamma^\alpha+\sum_{i=1}^d(\gamma-X_i)^\alpha\Big)\\
&=\alpha^d\idotsint \left(\prod_{i=1}^d x_i^{\alpha-1}\right)\,
\exp\Big(-(d-2)\gamma^\alpha +\sum_{i=1}^d\big((\gamma-x_i)^\alpha-x_i^\alpha\big)\Big)\di\bx
\end{align*}
After the change of variable $u_i=x_i/\gamma$ for all $i$  we obtain
that this integral is a Laplace-type integral:
\[
\alpha^d\gamma^{d\,\alpha}\underbrace{\idotsint_{\mathscr{D}}
h(\bu) \;\e^{-\gamma^\alpha \phi(\bu)}\,\di\bu}_{\textrm{Laplace-type}},
\]
\vspace{-0.5cm}
where:
\begin{align*}
\mathscr{D}&\idef \left\{\bu: 0< u_i< 1,\;\sum_i u_i> 1\right\}\\
h(\bu)&\idef  \prod_{i=1}^d u_i^{\alpha-1} \\
\phi(\bu)&\idef d-2+\sum_{i=1}^d \big(u_i^\alpha-(1-u_i)^\alpha\big)
\end{align*}
We now note the following properties of the Laplace integral.
First,  if $\bar{\mathscr{D}}$ denotes the closure of the open set $\mathscr{D}$,
the function $\phi(\bu)$ attains its unique global minimum within the bounded domain
$\bar{\mathscr{D}}\subseteq\mathbb{R}^d$ on the boundary
at $\bu^*=(1/d,\ldots,1/d)$. This can be seen either by applying the Lagrange constraint optimization method or more simply by noting that $u^\alpha-(1-u)^\alpha$  is monotonically increasing and $\phi(\bu)$
is a invariant to permutations of the components of $\bu$.
The minimum
\[
\phi(\bu^*)=d-2+d^{1-\alpha}-d^{1-\alpha}(d-1)^\alpha,
\]
as a function of $d$ is such that for $d>2$ we have the strict
inequality $\phi(\bu)\geq \phi(\bu^*)>0$ for all $\bu\in\bar{\mathscr{D}}$,
see Figure~\ref{fig:laplace}. The point $\bu^*$ is not a critical point, because
$\frac{\partial \phi}{\partial u_i} (\bu)=
\alpha \left(u_i^{\alpha-1}+(1-u_i)^{\alpha-1} \right)>0$  for all $i$
and $\bu\in\mathscr{D}$.
\vspace{-.5cm}
\begin{figure}[H]
\begin{center}
\includegraphics[scale=0.5]{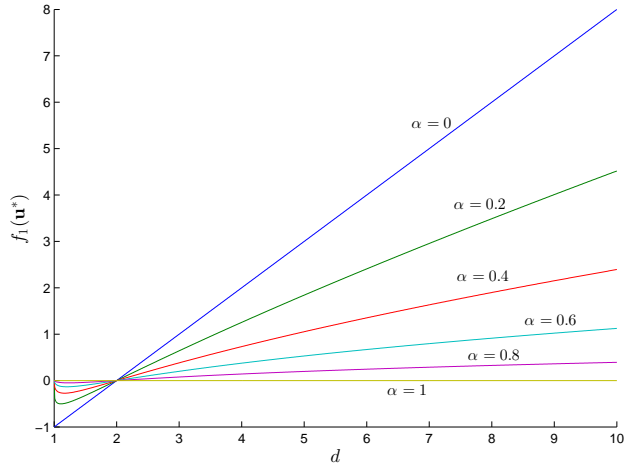}
\end{center}
\caption{The behavior of the function
$d-2+d^{1-\alpha}-d^{1-\alpha}(d-1)^\alpha$ for different values of the
parameter $\alpha$.}
\label{fig:laplace}
\end{figure}
\vspace{-.5cm}
Second, the function $h: \mathbb{R}^d\rightarrow \mathbb{R}$ is continuous and
the Hessian of the surface
$p(u_1,\ldots,u_{d-1})=\phi(u_1,u_2,\ldots,u_{d-1},1-u_1-u_2-\cdots-u_{d-1})$ is
\[
\frac{\partial^2 p}{\partial u_i \partial u_j}=\alpha(\alpha-1)\times
\begin{cases}
(1-\sum_{k<d}u_k)^{\alpha-2}- (\sum_{k<d}u_k)^{\alpha-2} & i\not=j\\
u_i^{\alpha-2}-(1-u_i)^{\alpha-2}+(1-\sum_{k<d}u_k)^{\alpha-2}-
(\sum_{k<d}u_k)^{\alpha-2} & i=j
\end{cases}\;,
\]
which when evaluated at $\bu^*$ yields the nondegenerate 
Hessian matrix
\[
\alpha(\alpha-1)\left(d^{2-\alpha}-(1-1/d)^{\alpha-2}\right)\times
\begin{pmatrix}
2& 1& 1&\cdots& 1\\
1& 2& 1&\cdots& 1\\
1& 1& 2&\cdots& 1\\
\vdots& \vdots& \ddots&\ddots& \vdots\\
1& 1& 1&\cdots& 2
\end{pmatrix}\;.
\]
As a result of all these conditions we have the Laplace-type asymptotic
expansion \cite[Page 500]{wong2001asymptotic} at a boundary point,
which is not a critical point:
\[
\idotsint_{\mathscr{D}} h(\bu) \;\e^{-\gamma^\alpha \phi(\bu)}\di\bu =
\bigO(\gamma^{-\alpha(d+1)/2}\times \e^{-\gamma^\alpha \phi(\bu^*)} )\;,
\]
where the constant $\phi(\bu^*)>0$. It follows that
\begin{align*}
& \frac{\Em \II\{ M_d<\gamma\}\,Z^2}{\ell^2}
\leq c_2 \alpha^d \gamma^{d\alpha}
\idotsint_{\mathscr{D}} h(\bu) \;\e^{-\gamma^\alpha \phi(\bu)}\di\bu\\
&= \bigO\big(\gamma^{\alpha(d-1)/2}\times \e^{-\gamma^\alpha \phi(\bu^*)}\big)
= \bigO\big(\e^{\alpha(d-1)/2 \log\gamma-\gamma^\alpha \phi(\bu^*)}\big)
\to 0\quad (\gamma\to\infty).
\end{align*}
Hence the second term in \eqref{decomp} vanishes as $\gamma\to\infty$.
\halmos
\end{proof}

\subsubsection{Sum of Pareto random variables.}
As in Example~\ref{ex:pareto}, we assume that $X_i$'s are independent and Pareto distributed
random variables on $[1,\infty)$ with common parameter $\alpha>0$.
The main result is the logarithmic efficiency of the second term
of \eqref{decomp}.
\begin{proposition}\label{lem:secondtermpareto}
For all $\epsilon>0$
\begin{equation*}
\limsup_{\gamma\to\infty} \frac{\Em [Z^2;M_d\le \gamma]}{\ell^{2-\epsilon}}=0.
\end{equation*}
\end{proposition}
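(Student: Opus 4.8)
The plan is to estimate $\Em[Z^2; M_d\le\gamma]$ directly from the formula \eqref{e:EZ2}, which on the event $\{M_d\le\gamma, S_d>\gamma\}$ reads
\[
\Em[Z^2; M_d\le\gamma] = \big(\Fdtail(\gamma)\big)^{d-2}\,\Em_f\Big[\II\{M_d\le\gamma, S_d>\gamma\}\prod_{i=1}^d \frac{1}{\Fdmtail(\gamma-X_i)}\Big].
\]
For the Pareto law $\Ftail(x)=x^{-\alpha}$ on $[1,\infty)$, Kesten's bound \eqref{kesten} gives $\Fdmtail(\gamma-X_i)\le c_1(1+\epsilon)^d(\gamma-X_i)^{-\alpha}$ for $\gamma-X_i\ge1$, so after pulling out constants the key quantity becomes
\[
\gamma^{-\alpha(d-2)}\,\Em_f\Big[\II\{M_d\le\gamma, S_d>\gamma\}\prod_{i=1}^d(\gamma-X_i)^{\alpha}\Big],
\]
and I must show this is $o(\gamma^{-\alpha(2-\epsilon)})$ for every $\epsilon>0$, i.e.\ that the expectation is $o(\gamma^{\alpha(d-\epsilon)})$.

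**The main step** is to control the integral $\int_{\mathscr D_\gamma} \prod_i(\gamma-x_i)^\alpha\,\prod_i f(x_i)\,\di\bx$ over $\mathscr D_\gamma=\{1\le x_i\le\gamma,\ \sum_i x_i>\gamma\}$, where $f(x)=\alpha x^{-\alpha-1}$. After the rescaling $x_i=\gamma u_i$, the integrand $\prod_i(\gamma-x_i)^\alpha \prod_i \alpha(\gamma u_i)^{-\alpha-1}\gamma$ produces an overall factor $\gamma^{\alpha d}\cdot\gamma^{-\alpha d}=1$ from the numerator/denominator powers, times $\gamma^{-d}\cdot\gamma^{d}=1$ from the $x_i^{-1}$ terms and the Jacobian — so the integral is, up to the constant $\alpha^d$, $\int_{\widetilde{\mathscr D}_\gamma}\prod_i (1-u_i)^\alpha u_i^{-\alpha-1}\,\di\bu$ over $\widetilde{\mathscr D}_\gamma=\{1/\gamma\le u_i\le1,\ \sum u_i>1\}$. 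The difficulty is that $u_i^{-\alpha-1}$ is non-integrable near $0$, so the integral diverges as $\gamma\to\infty$; the rate of divergence is governed by the corners of $\widetilde{\mathscr D}_\gamma$ where one coordinate is $\Theta(1)$ (close to $1$, forced by $\sum u_i>1$ when the others are tiny) and the remaining $d-1$ coordinates range down to $1/\gamma$. Near such a corner, say $u_1$ near $1$ and $u_2,\dots,u_d$ small, the factor $(1-u_1)^\alpha$ is integrable, while each small coordinate contributes $\int_{1/\gamma}^{c} u_j^{-\alpha-1}\,\di u_j = \Theta(\gamma^{\alpha})$ in the subexponential range $0<\alpha<1$. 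Thus the integral is $\Theta(\gamma^{\alpha(d-1)})$, and the full second moment is $\Theta\big(\gamma^{-\alpha(d-2)}\cdot\gamma^{\alpha(d-1)}\big)=\Theta(\gamma^{\alpha})$, while $\ell^{2-\epsilon}=\gamma^{-\alpha d(2-\epsilon)+o(1)}$...

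Wait — that does not divide out to something small, which signals that I should **not** use the crude Kesten bound on every factor. The right approach is more careful: keep $\Fdmtail(\gamma-X_i)=1$ whenever $X_i\ge\gamma-1$ (here it contributes no smallness but also no blow-up), and for the coordinates that are $O(1)$ use the exact tail asymptotic $\Fdmtail(y)\sim(d-1)y^{-\alpha}$ rather than the loose bound. So I would split $\{M_d\le\gamma, S_d>\gamma\}$ according to how many coordinates are "large" (comparable to $\gamma$): since $S_d>\gamma$ and each $X_i\le\gamma$, at least one coordinate exceeds $\gamma/d$. On the sub-event where exactly one coordinate, say $X_j$, is large ($X_j>\gamma-K$ for a growing truncation level $K=K(\gamma)$) and the rest are $\le K$, the factor $\Fdmtail(\gamma-X_j)=1$ when $X_j>\gamma-1$ and otherwise is $\ge\Fdmtail(K)\gtrsim K^{-\alpha}$; the other factors satisfy $\Fdmtail(\gamma-X_i)\ge\Fdmtail(\gamma)\gtrsim\gamma^{-\alpha}$; and $\Pm_f(X_i>\gamma-K\text{ for some }i)\lesssim d\,\gamma^{-\alpha}$. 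Optimizing the truncation level $K$ (polynomially growing, $K=\gamma^{1-\delta}$) trades off the $K^{-\alpha(d-1)}$ blow-up from the small coordinates against the $\gamma$-power savings, and on the event where two or more coordinates are large one gains an extra $\Pm_f(\cdot)\lesssim\gamma^{-2\alpha}$. Carrying this bookkeeping through should yield $\Em[Z^2;M_d\le\gamma]=\bigO(\gamma^{-\alpha d}\cdot\gamma^{\alpha\epsilon'})$ for $\epsilon'$ as small as desired, hence $\Em[Z^2;M_d\le\gamma]/\ell^{2-\epsilon}\to0$.

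**The hard part** will be making the multi-scale split rigorous and choosing the truncation level $K(\gamma)$ so that all error terms are simultaneously $o(\ell^{2-\epsilon})$; in particular one must check that the region where a coordinate $X_i$ lies in the intermediate range $K<X_i<\gamma-K$ contributes negligibly, which needs a convolution tail estimate for $\Fdmtail(\gamma-x)$ uniform in that range, and that the residual power of $\gamma$ left over after the cancellation can be absorbed into the arbitrary $\epsilon$. I would handle this by a dyadic decomposition of each coordinate's range and Kesten's bound term-by-term, summing a geometric-type series in the dyadic index; the exponent $2-\epsilon$ (rather than $2$) in the statement is precisely what gives the slack needed to absorb the polylogarithmic/polynomial-in-$\epsilon$ overhead, so logarithmic efficiency — not bounded relative error — is the natural conclusion here.
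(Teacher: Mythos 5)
Your proposal runs into trouble at two places, and neither is fully recovered.

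First, the estimate that drives your ``pivot'' is itself wrong. You claim that after rescaling, the integral $\int_{\widetilde{\scD}_\gamma}\prod_i(1-u_i)^\alpha u_i^{-\alpha-1}\,\di\bu$ is $\Theta(\gamma^{\alpha(d-1)})$, because ``$(1-u_1)^\alpha$ is integrable'' while each of the $d-1$ small coordinates contributes $\Theta(\gamma^\alpha)$. This ignores the coupling imposed by the constraint $\sum_i u_i>1$: when $u_2,\ldots,u_d$ are all of order $1/\gamma$, the coordinate $u_1$ is pinned to the thin band $(1-\sum_{i\ne 1}u_i,\,1)$, and
\[
\int_{1-\sum_{i\neq1}u_i}^{1}(1-u_1)^\alpha u_1^{-\alpha-1}\,\di u_1 \;\asymp\; \Big(\sum_{i\neq1}u_i\Big)^{\alpha+1},
\]
which cancels one full $u_i^{-\alpha-1}$ singularity. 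Working this out (for $d=2$ the integral is $O(1)$, not $\Theta(\gamma^\alpha)$; more generally the near-corner contribution is $O(\gamma^{\alpha(d-2)})$), the ``crude'' bound you discarded is in fact one power of $\gamma^\alpha$ better than you computed, and it does not blow up in the way you thought. Second, having talked yourself out of the direct approach, you replace it with a multi-scale/dyadic argument that is only described, not carried out: ``Carrying this bookkeeping through should yield\dots'' and ``The hard part will be making the multi-scale split rigorous\dots'' are plans, not proofs, and the genuinely delicate uniform-in-$x$ convolution-tail estimate you flag is precisely the step you would have to produce.

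The paper avoids both issues by a different organization. It bounds $\Fdtail(\gamma)/\Fdmtail(\gamma-X_i)$ above via \eqref{e:trivial} and \eqref{kesten}, then partitions $\{M_d\le\gamma,\,S_d>\gamma\}\subset\bigcup_{n=2}^d B_n$ with $B_n=\{S_{n-1}\le\gamma,\,S_n>\gamma,\,M_n\le\gamma\}$, and uses $\Ftail(\gamma)/\Ftail(\gamma-x)\le1$ to drop the $d-n$ inactive factors on $B_n$. This yields the bound $\sum_{n=2}^d H_n(\gamma)$ with $H_n(\gamma)=\alpha^n\gamma^{-n\alpha}I_n(\gamma,1)$, where $I_n$ satisfies a one-step integral recursion \eqref{e:recursionIn}. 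The growth of $I_n$ is then controlled by induction and L'H\^opital, via the derivative identity of Lemma~\ref{LemmaI} and the stability estimate of Lemma~\ref{LemmaIG2}, to get $I_n(\gamma,1)=o(\gamma^{\alpha(n-2)}\log\gamma)$, after which comparing with $\ell^{2-\epsilon}\ge\gamma^{-2\alpha}\log\gamma$ finishes. Your instinct that the $\epsilon$-slack in $\ell^{2-\epsilon}$ is what absorbs the residual logarithm is correct, and your decomposition ``by how many coordinates are large'' is in the same spirit as the $B_n$ partition; but your version has a wrong intermediate estimate and stops short of any quantitative bound, so it does not establish the proposition.
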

\par\bigskip\noindent
\begin{proof}
The proof will be the result of a number of lemmas.
First, similarly as in Lemma \ref{lem:secondtermweibull} we utilize expression
\eqref{e:EZ2} for rewriting the second moment as a product, and then we apply
\eqref{e:trivial} and \eqref{kesten} to bound the factors. The result is
that it is enough to prove that
\begin{equation}\label{e:enough}
\limsup_{\gamma\to\infty} \frac{1}{\ell^{2-\epsilon}}\;
\Em_f\II\{M_d\le \gamma, S_d>\gamma\}\prod_{i=1}^d \frac{\Ftail(\gamma)}{\Ftail(\gamma-X_i)}=0.
\end{equation}
Our approach is to consider a larger set containing $\{M_d\le \gamma,S_d>\gamma\}$.
For that purpose we define the we define the quantities
\begin{equation*}
H_n(\gamma):=\Em_f\bigg[\prod_{k=1}^n \frac{\Ftail(\gamma)}{\Ftail(\gamma-X_k)};B_n\bigg],\qquad n\ge2.
\end{equation*}
where
\begin{equation*}
B_n=\{S_{n-1}\le \gamma, S_n>\gamma, M_n\le \gamma\},\qquad n=2,3,\ldots.
\end{equation*}
Observe that $\{M_d\le \gamma,S_d>\gamma\}\subset\bigcup_{n=2}^d B_n$.
Further to this, observing that $\Ftail(\gamma)/\Ftail(\gamma-x)\le 1$
for all $x\ge 1$, we can set
\begin{equation*}
 \prod_{k=1}^d \frac{\Ftail(\gamma)}{\Ftail(\gamma-x_k)}\le\prod_{k=1}^n \frac{\Ftail(\gamma)}{\Ftail(\gamma-x_k)},
  \qquad n\le d.
\end{equation*}
In this way we arrive at the following inequality
\begin{equation}\label{e:sumhn}
\begin{split}
\Em_f\II\{&M_d\le \gamma, S_d>\gamma\}\prod_{i=1}^d \frac{\Ftail(\gamma)}{\Ftail(\gamma-X_i)}\\
&\le \sum_{n=2}^d \Em_f \bigg[\prod_{i=1}^d \frac{\Ftail(\gamma)}{\Ftail(\gamma-X_i)};B_n\bigg]\\
&\le \sum_{n=2}^d \Em_f \bigg[ \prod_{i=1}^n \frac{\Ftail(\gamma)}{\Ftail(\gamma-X_i)}
;B_n\bigg]
=\sum_{n=2}^dH_n(\gamma).
\end{split}
\end{equation}
Now, the quantities $H_n$ in the sum above can be written in integral form as
 \begin{align*}
  H_n(\gamma)
   &= \int_{B_n} \bigg(\prod_{k=1}^n \frac{\Ftail(\gamma)}{\Ftail(\gamma-x_k)}\bigg)
  \, \bigg(\prod_{k=1}^n f(x_k)\bigg)\ \dd x_n\,\dd x_{n-1}\dots\, \dd x_{2}\, \dd x_{1}\\
  &=\int\limits_{1}^{\gamma-(n-2)}\ \ \int\limits_{1}^{\gamma-x_1-(n-3)}\!\dots\!
  \int\limits_{1}^{\gamma-x_1-\dots-x_{n-2}}\!\!\!\!\!
   \int\limits_{(\gamma-x_1-\dots-x_{n-1})\vee 1}^{\gamma}
    \prod_{k=1}^n\bigg(\frac{\gamma-x_k}{\gamma}\bigg)^\alpha	
    \frac{\alpha}{x_k^{\alpha+1}}\ \dd x_n\,\dd x_{n-1}\dots\, \dd x_{2}\, \dd x_{1}.
 \end{align*}
 Further, the change of variable $y_k=x_k/\gamma$ yields 
 \begin{equation*}
  \frac{\alpha^n}{\gamma^{n\alpha}}\!
  \int\limits_{\gamma^{-1}}^{1-(n-2)\gamma^{-1}}\ \ \int\limits_{\gamma^{-1}}^{1-y_1-(n-3)\gamma^{-1}}
   \!\dots\!\int\limits_{\gamma^{-1}}^{1-y_1-\dots-y_{n-2}}\!\!\!\!\!\!\int\limits_{(1-y_1-\dots-y_{n-1})\vee\gamma^{-1}}^{1}
    \prod_{k=1}^n\ L(y_k)\
     \dd y_n\,\dd y_{n-1}\dots\dd y_2 \, \dd y_1,
 \end{equation*}
 where
 \begin{equation}\label{e:Ly}
 L(y):=(1-y)^\alpha \;y^{-(\alpha+1)},\qquad y\in(0,1].
 \end{equation}
 In particular, it will be useful to write
\begin{equation}\label{e:defIn}
 H_n(\gamma)=\alpha^n\,\gamma^{-n\alpha}\,I_{n}(\gamma,1),
\end{equation}
  where the function $I_{n}(\gamma,1)$ is the multiple integral in the
  expression above. Moreover, $I_{n}(\gamma,\zeta)$ can be defined recursively for
  via
 \begin{equation}\label{e:recursionIn}
  I_{n}(\gamma,\zeta):=
   \begin{cases}
   \int_{\zeta\vee \gamma^{-1}}^1  L(y)\dd y,&\quad n=1,\\
   \int_{\gamma^{-1}}^{\zeta-(n-2)\gamma^{-1}} L(y)\, I_{n-1}(\gamma,\zeta-y)\,\dd y,&\quad n\ge2.
   \end{cases}
  \end{equation}

%
%
Next we will prove that for $n=2,3,\ldots$, it holds that
 \begin{equation}\label{e:limsupterm}
  \limsup_{\gamma\rightarrow\infty}
  \frac{I_n(\gamma,1)}{\gamma^{\alpha(n-2)}\log \gamma} = 0.
 \end{equation}
Since both numerator and denominator of \eqref{e:limsupterm} have limit $+\infty$,
 we can apply L'Hopital. 
Lemma \ref{LemmaI} in the appendix provides a recursive expression for the derivative of the functions
 $I_n(\gamma,\zeta)$:
 \begin{equation}\label{e:Inderivative}
 \frac{\partial}{\partial\gamma} I_n(\gamma,\zeta) =
 nL\big(\gamma^{-1}\big)\,I_{n-1}\big(\gamma,\zeta-\gamma^{-1}\big)\,\gamma^{-2},
 \qquad n=2,3,\ldots.
 \end{equation}
 
Therefore, we obtain
\begin{align}
& \limsup_{\gamma\to\infty} \frac{I_n(\gamma,1)}{\gamma^{\alpha(n-2)}\log \gamma}
= \limsup_{\gamma\to\infty}
\frac{\frac{\dd}{\dd\gamma}I_n(\gamma,1)}{\frac{\dd}{\dd\gamma}\gamma^{\alpha(n-2)}\log \gamma}\nonumber\\
&\quad = \limsup_{\gamma\to\infty}\frac{n\,L\big(\gamma^{-1}\big)\,I_{n-1}\big(\gamma,1-\gamma^{-1}\big)\,\gamma^{-2}}%
{\big(1+\alpha(n-2)\log\gamma\big)\,\gamma^{\alpha(n-2)-1}},\qquad n=2,3,\ldots.\label{e:hopital}
\end{align}
\begin{itemize}
\item
$n=2$. The expression in \eqref{e:hopital} becomes
\[
\frac{2\,L\big(\gamma^{-1}\big)\,I_{1}\big(\gamma,1-\gamma^{-1}\big)\,\gamma^{-2}}%
{\gamma^{-1}}
= \frac{2\,L\big(\gamma^{-1}\big)\,I_{1}\big(\gamma,1-\gamma^{-1}\big)}{\gamma}.
\]
Observe that
\begin{align*}
L\big(\gamma^{-1}\big)&=\big(1-\gamma^{-1}\big)^\alpha\,\gamma^{\alpha+1}
=\bigO\big(\gamma^{\alpha+1}\big),\qquad\gamma\to\infty;\\
L\big(1-\gamma^{-1}\big)&=\gamma^{-\alpha}\,\big(1-\gamma^{-1}\big)^{-(\alpha+1)}
=\bigO\big(\gamma^{-\alpha}\big),\qquad\gamma\to\infty;\\
I_{1}\big(\gamma,1-\gamma^{-1}\big)&=
\int_{1-\gamma^{-1}}^1 L(y)\,\dd y \leq \gamma^{-1} L\big(1-\gamma^{-1}\big)
=\bigO\big(\gamma^{-(\alpha+1)}\big),\qquad\gamma\to\infty,
\end{align*}
where the inequality follows  because the
function $L(y)$ is decreasing on $(0,1]$.
Hence,
\[
\limsup_{\gamma\to\infty}
\frac{2\,L\big(\gamma^{-1}\big)\,I_{1}\big(\gamma,1-\gamma^{-1}\big)}{\gamma}
=\limsup_{\gamma\to\infty}
\frac{(\mbox{a constant})\,\times\,\gamma^{\alpha+1}\,\gamma^{-(\alpha+1)}}{\gamma}=0.
\]
\item
$n\ge 2$.
Assume \eqref{e:limsupterm} holds for $n$. Then reasoning as above
and using Lemma \ref{LemmaIG2} for equality (i), we get for $n+1$
\begin{align*}
&  \limsup_{\gamma\rightarrow\infty}
  \frac{I_{n+1}(\gamma,1)}{\gamma^{\alpha(n-1)}\log \gamma}\\
&\quad=\limsup_{\gamma\to\infty}
\frac{\frac{\dd}{\dd\gamma}I_{n+1}(\gamma,1)}%
{\frac{\dd}{\dd\gamma}\gamma^{\alpha(n-1)}\log \gamma}\\
&\quad= \limsup_{\gamma\to\infty}
\frac{(n+1)L\big(\gamma^{-1}\big)\,I_{n}\big(\gamma,1-\gamma^{-1}\big)\,\gamma^{-2}}%
{\big(1+\alpha(n-1)\log\gamma\big)\,\gamma^{\alpha(n-1)-1}}\\
&\quad\overset{\rm(i)}{=} \limsup_{\gamma\to\infty}
\frac{(n+1)L\big(\gamma^{-1}\big)\,\Big(I_{n}\big(\gamma,1\big)+o(1)\Big)\,\gamma^{-2}}%
{\big(1+\alpha(n-1)\log\gamma\big)\,\gamma^{\alpha(n-1)-1}}\\
&\quad=\limsup_{\gamma\to\infty}
\frac{(\mbox{a constant})\,\times\,\gamma^{\alpha+1} \,I_{n}\big(\gamma,1\big)\,\gamma^{-2} + o(1)}%
{(\mbox{a constant})\,\times\,\gamma^{\alpha(n-1)-1}\,\log\gamma}\\
&\quad=\limsup_{\gamma\to\infty}\ \ (\mbox{a constant})\,\times\,
\frac{I_{n}\big(\gamma,1\big) + o(1)}%
{\gamma^{\alpha(n-2)}\,\log\gamma}=0
\end{align*}
\end{itemize}

\par\bigskip\noindent
Putting together these arguments we can complete the proof of the Proposition:
\begin{align*}
& \limsup_{\gamma\to\infty}
\frac{\Em \II\{ M_d\le\gamma\}\,Z^2}{\ell^{2-\epsilon}}\\
&\quad\overset{\eqref{e:enough}}{\le}\limsup_{\gamma\to\infty}
\frac{1}{\ell^{2-\epsilon}}\;
\Em_f\II\{M_d\le \gamma, S_d>\gamma\}\prod_{i=1}^d \frac{\Ftail(\gamma)}{\Ftail(\gamma-X_i)}\\
&\quad\overset{\eqref{e:sumhn}}{\le}\limsup_{\gamma\to\infty}
\frac{\sum_{n=2}^d H_n(\gamma)}{\ell^{2-\epsilon}}\\
&\quad\overset{\eqref{e:defIn}}{=}\limsup_{\gamma\to\infty}
\sum_{n=2}^d \frac{\alpha^n  I_n(\gamma)}{\gamma^{\alpha n}\,\ell^{2-\epsilon}}
\end{align*}
Now notice that
\[
\ell = \Fdtail(\gamma)\ge \Ftail(\gamma)=\gamma^{-\alpha},
\]
thus, for $\epsilon<1/\alpha$ (that is, $\epsilon\alpha<1$)
\[
\ell^{2-\epsilon}\ge \gamma^{-2\alpha}\,\gamma^{\alpha\epsilon}
\ge \gamma^{-2\alpha}\,\log\gamma,\qquad\gamma\to\infty.
\]
Combining this with above, we get
\[
\limsup_{\gamma\to\infty}
\sum_{n=2}^d \frac{\alpha^n I_n(\gamma)}{ \gamma^{\alpha n}\,\ell^{2-\epsilon}}
\le
\sum_{n=2}^d \alpha^{n} \limsup_{\gamma\to\infty}
\frac{I_n(\gamma)}{\gamma^{\alpha (n-2)}\,\log\gamma}
=0
\]
\end{proof}
%
%
%
%

\subsection{Light-tailed case}
\label{LightSection}
In this section we consider the case where $F$ belongs to a subfamily of light-tailed distributions
as defined by Embrechts and Goldie \cite{embrechts1980closure}. We say that a distribution $F$
belongs to the \emph{Embrechts-Goldie} family of distributions indexed by the parameter
$\theta\ge0$ and denoted $\mathcal{L}(\theta)$,  if
  \begin{equation}\label{ExpClass}
   \lim_{\gamma\rightarrow\infty}\frac{\Ftail(\gamma+x)}{\Ftail(\gamma)}=\e^{-\theta x}.
  \end{equation}
 If $\theta$ is strictly larger than $0$ then $\mathcal{L}(\theta)$ contains light-tailed
 distributions exclusively and is often referred
 as the \emph{exponential class}. This is a very rich class of distributions that
 includes several well know light-tailed distributions such as
  the exponential, gamma and phase-type.
 In contrast, if $\theta=0$, then $\mathcal{L}(0)$ corresponds to the class of \emph{long-tailed distributions}
 which is a large subclass of heavy-tailed distributions.
 In this section we concentrate on the light-tailed case $\theta>0$, but in order to derive our efficiency
 statements we draw some results for the class of the so called \emph{long-tailed functions} (cf. \cite
 [Definition 2.14]{foss2011introduction}).   More precisely, $h$ is long-tailed if it is
 ultimately positive and
\begin{equation}\label{LongTail}
 \lim_{\gamma\rightarrow\infty}\frac{h(\gamma+x)}{h(\gamma)}=1,\qquad \forall x.
\end{equation}
Obviously, if $F\in \mathcal{L}(0)$, then the tail probability $\Ftail$
is long tailed.
Important properties for the exponential class ($\theta> 0$) are
\begin{itemize}
\item
$\mathcal{L}(\theta)$  is closed under convolutions
\cite[Theorem 3]{embrechts1980closure}.
That is, if $F\in \mathcal{L}(\theta)$, then
   the $d$-fold convolution $F^{*d}\in\mathcal{L}(\theta)$.
\item
Define for $\alpha>0$ the distribution $G(x)=1-\big(\Ftail(x)\big)^\alpha$.
One can easily check that $G\in \mathcal{L}(\alpha\theta)$ whenever
  $F\in\mathcal{L}(\theta)$.

\item
The tail probability can be decomposed into the product of an exponential
   and a long tailed function
  \begin{equation}
    \label{LightDecomposition}
    \Ftail(\gamma)=\e^{-\theta \gamma}h(\gamma).
  \end{equation}
\end{itemize}
Decomposition \eqref{LightDecomposition} will be useful for proving efficiency
of the proposed estimator, but it is also interesting
on its own.
To verify it we define $h(\gamma):=\Ftail(\gamma)\;\e^{\theta\gamma}$.  Since $F\in\mathcal{L}(\theta)$ it follows that
  \begin{equation*}
   \lim_{\gamma\rightarrow\infty}\frac{h(\gamma+x)}{h(\gamma)}
   =\lim_{\gamma\rightarrow\infty}\frac{h(\gamma+x)\,\e^{-\theta(\gamma+x)}}{h(\gamma)\,\e^{-\theta(\gamma+x)}}
   =\lim_{\gamma\rightarrow\infty}\frac{\Ftail(\gamma+x)}{\Ftail(\gamma)\,\e^{-\theta x}}
   =1.
  \end{equation*}
The next property states that the asymptotic decay of a long-tailed function is slower
than the exponential rate \cite[Lemma 2.17]{foss2011introduction}.  More precisely, if $h$ is long tailed, then
\begin{equation}\label{DecayLong}
  \lim_{\gamma\rightarrow\infty}\frac{h(\gamma)}{\e^{-\epsilon \gamma}}=\infty,\qquad \forall\epsilon>0.
\end{equation}
These properties will be employed
to construct an asymptotic upper bound for the semi-parametric estimator.  In particular,
the following Lemma shows that the ratio of two tail convolutions of the same distribution in
$\mathcal{L}(\theta)$ cannot increase/decrease faster than at exponential rate.
\begin{lemma}\label{lem4}
	Let $F\in\mathcal{L}(\theta)$, $\theta>0$, and $d_1,d_2\in\nat$. Then
	${\overline{F^{*d_1}}(\gamma)}\Big/\,{\overline{F^{*d_2}}(\gamma)}=o(\e^{\epsilon \gamma}),\qquad \forall \epsilon>0$.
  \end{lemma}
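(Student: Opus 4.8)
The plan is to cancel the common exponential factor from the two tail convolutions and thereby reduce the claim to a statement about quotients of long‑tailed functions, to which the properties already developed in this section apply almost directly.

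First I would invoke the closure of $\mathcal{L}(\theta)$ under convolutions \cite[Theorem 3]{embrechts1980closure} to conclude that both $F^{*d_1}$ and $F^{*d_2}$ lie in $\mathcal{L}(\theta)$. Applying the decomposition \eqref{LightDecomposition} to each of them gives representations $\overline{F^{*d_1}}(\gamma)=\e^{-\theta\gamma}h_1(\gamma)$ and $\overline{F^{*d_2}}(\gamma)=\e^{-\theta\gamma}h_2(\gamma)$, where $h_1$ and $h_2$ are long‑tailed and (since $F$ has right‑unbounded support) strictly positive. The exponential factors then cancel:
\[
\frac{\overline{F^{*d_1}}(\gamma)}{\overline{F^{*d_2}}(\gamma)}=\frac{h_1(\gamma)}{h_2(\gamma)},
\]
so it suffices to show that a quotient of two long‑tailed functions is $o(\e^{\epsilon\gamma})$ for every $\epsilon>0$.

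Second, I would record two elementary facts about a long‑tailed function $h$. (a) The reciprocal $1/h$ is again long‑tailed, because $\tfrac{1/h(\gamma+x)}{1/h(\gamma)}=\tfrac{h(\gamma)}{h(\gamma+x)}\to1$ by \eqref{LongTail}. (b) Every long‑tailed function satisfies $h(\gamma)=o(\e^{\epsilon\gamma})$ for all $\epsilon>0$: this is \eqref{DecayLong} applied to the long‑tailed function $1/h$, which gives $\e^{\epsilon\gamma}/h(\gamma)\to\infty$, i.e. $h(\gamma)/\e^{\epsilon\gamma}\to0$. Now fix $\epsilon>0$. By (b), $h_1(\gamma)=o(\e^{\epsilon\gamma/2})$; by \eqref{DecayLong} applied directly to $h_2$, $h_2(\gamma)\e^{\epsilon\gamma/2}\to\infty$, i.e. $1/h_2(\gamma)=o(\e^{\epsilon\gamma/2})$. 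Multiplying the two $o(\cdot)$ statements yields
\[
\frac{h_1(\gamma)}{h_2(\gamma)}=\frac{h_1(\gamma)}{\e^{\epsilon\gamma/2}}\cdot\frac{1/h_2(\gamma)}{\e^{\epsilon\gamma/2}}\cdot\e^{\epsilon\gamma}=o(\e^{\epsilon\gamma}),
\]
which is exactly the assertion of the lemma.

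The only delicate point — and what I would regard as the main (though mild) obstacle — is justifying fact (b), namely that a long‑tailed function cannot grow faster than any exponential. I would avoid proving it from scratch and instead derive it cleanly from the already‑stated property \eqref{DecayLong} via the reciprocal trick in (a); an alternative route, observing that $\log h(\gamma+1)-\log h(\gamma)\to0$ forces $\log h(\gamma)=o(\gamma)$, also works but needs the (standard) local uniformity of the limit in \eqref{LongTail}. Everything else — the cancellation of the $\e^{-\theta\gamma}$ factors and the combination of the two little‑$o$ bounds — is routine bookkeeping.
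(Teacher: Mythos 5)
Your proposal is correct and follows essentially the same route as the paper: cancel the common exponential factor via the decomposition \eqref{LightDecomposition}, reduce the claim to a statement about the ratio $h_1/h_2$ of long-tailed functions, and conclude from the non-exponential-decay property \eqref{DecayLong}. The only (cosmetic) difference is that the paper first observes that $h_2/h_1$ is itself long-tailed and applies \eqref{DecayLong} to it once, whereas you bound $h_1$ and $1/h_2$ separately by $o(\e^{\epsilon\gamma/2})$ and multiply; both rely on the same reciprocal trick, so there is no substantive gap.
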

  \begin{proof}
   Since $\mathcal{L}(\theta)$ is closed by convolution, then
   $F^{*d_1}, F^{*d_2}\in \mathcal{L}(\theta)$ and their tail distributions have decompositions
   as in \eqref{LightDecomposition} for some long tailed functions $h_1$ and $h_2$.  Therefore
  \begin{equation*}
   \frac{\overline{F^{*d_1}}(\gamma)}{\overline{F^{*d_2}}(\gamma)}=
   \frac{h_1(\gamma)\e^{-\theta \gamma}}{h_2(\gamma)\e^{-\theta \gamma}}=\frac{h_1(\gamma)}{h_2(\gamma)}.
  \end{equation*}
  We first argue that both $h_1(\cdot)/h_2(\cdot)$ and its reciprocal function are long-tailed.
  This is so, because they are ultimately positive, and
  \[
  \frac{h_1(\gamma+x)/h_2(\gamma+x)}{h_1(\gamma)/h_2(\gamma)}
  =\frac{h_1(\gamma+x)}{h_1(\gamma)}\,\times\,\frac{h_2(\gamma)}{h_2(\gamma+x)}\to 1.
  \]
  The reciprocal function goes similarly.
  Thus, $h_2(\cdot)/h_1(\cdot)$
  satisfies condition \eqref{DecayLong}, which says
 \begin{equation*}
  \lim_{\gamma\rightarrow\infty}\frac{h_2(\gamma)/h_1(\gamma)}{\e^{-\epsilon\gamma}}=\infty.
 \end{equation*}
  Clearly, this is equivalent to
 \begin{equation*}
  \lim_{\gamma\rightarrow\infty}\frac{h_1(\gamma)/h_2(\gamma)}{\e^{\epsilon\gamma}}=0.
 \end{equation*}
\halmos
  \end{proof}
We also have the following.

\noindent   \textbf{Assumption A:}
   \label{lem5}
   Let  $h$ be a long-tailed function such that $h(x)>0$ for all $x\ge0$.  Then
   $G(\gamma):=\sup\{h(\gamma)/h(x):0\le x\le \gamma\}=o(\e^{\epsilon\gamma})$ for all $\epsilon>0$.

\begin{proposition}[Logarithmic efficiency of $\hat\ell$]
If Assumption A holds, the estimator $Z=\Ind\{S(\bX)>\gamma\}\frac{f(\bX)}{g(\bX)}$ satisfies
\[
\lim_{\gamma\uparrow\infty}\frac{\Em Z^2}{\ell^{2-\epsilon}(\gamma)}=0,\qquad \forall \epsilon>0\;.
\]
\end{proposition}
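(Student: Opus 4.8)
The plan is to estimate the relative second moment $\Em Z^2/\ell^{2-\epsilon}$ directly from the product representation \eqref{e:EZ2}, splitting it along the maximal jump exactly as in \eqref{decomp}: $\Em Z^2=\Em[\II\{M_d>\gamma\}Z^2]+\Em[\II\{M_d\le\gamma\}Z^2]$. Since $F\in\mathcal L(\theta)$ and $\mathcal L(\theta)$ is closed under convolution, both $\Fdtail$ and $\Fdmtail$ admit the decomposition \eqref{LightDecomposition}, say $\Fdtail(\gamma)=\e^{-\theta\gamma}h_d(\gamma)$ with $h_d$ long-tailed; recall that a long-tailed function satisfies $h(\gamma)=\e^{o(\gamma)}$ in both directions (the ``slow decay'' half is \eqref{DecayLong}, the ``slow growth'' half follows from the defining relation \eqref{LongTail} by a Ces\`aro argument together with the monotonicity of $\Ftail$). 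Consequently $\ell^{s}=h_d(\gamma)^{s}\,\e^{-s\theta\gamma}=\e^{o(\gamma)}\,\e^{-s\theta\gamma}$ for every fixed $s>0$, so it will suffice, for each of the two terms, to produce an upper bound of the form $\e^{o(\gamma)}\,\e^{-c\gamma}$ with $c>(2-\epsilon)\theta$.

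For the first term I would run the proof of Lemma~\ref{lem:firstterm} verbatim up to, but not including, the use of Kesten's inequality \eqref{kesten}: on $\{M_d>\gamma\}$ some jump $X_j$ exceeds $\gamma$, so $\Fdmtail(\gamma-X_j)=1$, while $\Fdmtail(\gamma-X_i)\ge\Fdmtail(\gamma)$ for the remaining indices by monotonicity, which yields the \emph{deterministic} bound $\Em[\II\{M_d>\gamma\}Z^2]\le\ell^{2}\big(\Fdtail(\gamma)/\Fdmtail(\gamma)\big)^{d-1}$. Here \eqref{kesten} is replaced by Lemma~\ref{lem4} with $d_1=d$, $d_2=d-1$, which gives $\Fdtail(\gamma)/\Fdmtail(\gamma)=o(\e^{\epsilon\gamma})$ for every $\epsilon>0$, i.e.\ this ratio is $\e^{o(\gamma)}$. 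Dividing by $\ell^{2-\epsilon}=\e^{o(\gamma)}\e^{-(2-\epsilon)\theta\gamma}$ and by $\ell^{2}=\e^{o(\gamma)}\e^{-2\theta\gamma}$ leaves $\e^{o(\gamma)}\e^{-\theta\epsilon\gamma}\to0$.

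For the second term Assumption~A is the crucial ingredient. On $\{M_d\le\gamma,\,S_d>\gamma\}$ I would first use \eqref{e:trivial} to replace every $\Fdmtail(\gamma-X_i)$ by the smaller $\Ftail(\gamma-X_i)$, and then apply the decomposition \eqref{LightDecomposition} with $h_1(x):=\Ftail(x)\e^{\theta x}$ to write
\[
\prod_{i=1}^d\frac{\Fdtail(\gamma)}{\Ftail(\gamma-X_i)}=\e^{-\theta S_d}\prod_{i=1}^d\frac{h_d(\gamma)}{h_1(\gamma-X_i)}.
\]
Because $M_d\le\gamma$ forces $\gamma-X_i\in[0,\gamma]$, each factor of the product is bounded by $\dfrac{\Fdtail(\gamma)}{\Ftail(\gamma)}\cdot\sup_{0\le y\le\gamma}\dfrac{h_1(\gamma)}{h_1(y)}$, where the first quotient is $\e^{o(\gamma)}$ by Lemma~\ref{lem4} ($d_1=d$, $d_2=1$) and the second is $\e^{o(\gamma)}$ by Assumption~A applied to the long-tailed, everywhere-positive function $h_1$. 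This produces a deterministic prefactor $\e^{o(\gamma)}$ in front of $\Em_f[\II\{S_d>\gamma\}\,\e^{-\theta S_d}]\le\e^{-\theta\gamma}\Pm_f(S_d>\gamma)=\e^{-\theta\gamma}\ell$, so that $\Em[\II\{M_d\le\gamma\}Z^2]\le\e^{o(\gamma)}\e^{-\theta\gamma}\ell=\e^{o(\gamma)}\e^{-2\theta\gamma}$; dividing by $\ell^{2-\epsilon}=\e^{o(\gamma)}\e^{-(2-\epsilon)\theta\gamma}$ again gives $\e^{o(\gamma)}\e^{-\theta\epsilon\gamma}\to0$. Summing the two contributions yields $\lim_{\gamma\to\infty}\Em Z^2/\ell^{2-\epsilon}=0$ for every $\epsilon>0$.

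The main obstacle is precisely the behaviour of the denominators $\Fdmtail(\gamma-X_i)$ when a jump is close to $\gamma$: the naive monotonicity bound $\Fdmtail(\gamma-X_i)\ge\Fdmtail(\gamma)$ discards the mass concentrated near that boundary, and without a uniform sub-exponential control on $\sup_{0\le y\le\gamma}h(\gamma)/h(y)$ the product over $i$ could grow like a genuine exponential and overwhelm the single exponential factor $\e^{-\theta\gamma}$ extracted from $\e^{-\theta S_d}\le\e^{-\theta\gamma}$ on $\{S_d>\gamma\}$ --- this is exactly the deficiency that Assumption~A repairs. Two subsidiary points must also be respected: the $\{M_d>\gamma\}$ part has to be quarantined and handled by the cruder estimate above, because there $\gamma-X_j<0$ and the decomposition \eqref{LightDecomposition}, built from $h(x)=\Ftail(x)\e^{\theta x}$, degenerates at negative arguments; and one must invoke the elementary fact that every long-tailed function is $\e^{o(\gamma)}$ (growth and decay), so that all the accumulated $\e^{o(\gamma)}$ overheads are genuinely absorbed by the surviving $\e^{-\theta\epsilon\gamma}$.
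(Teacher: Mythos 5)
Your proof is correct, and in one respect it is actually more careful than the paper's own argument. The paper does not split along the maximal jump; it applies the pointwise bound $\prod_{i=1}^d \Fdmtail(\gamma)/\Fdmtail(\gamma-X_i)\le\big(G(\gamma)\big)^d\,\e^{-\theta S(\bX)}$ on the whole event $\{S(\bX)>\gamma\}$, where $G(\gamma)=\sup_{0\le x\le\gamma}h(\gamma)/h(\gamma-x)$ and $\Fdmtail(x)=h(x)\e^{-\theta x}$. That inequality is only valid when every $X_i\le\gamma$: if some $X_j>\gamma$ then $\gamma-X_j<0$, $\Fdmtail(\gamma-X_j)=1$, and $h(\gamma-X_j)$ is either undefined or degenerates to $\e^{\theta(\gamma-X_j)}$, so the supremum over $x\in[0,\gamma]$ no longer controls the factor $h(\gamma)/h(\gamma-X_j)$ and the claimed pointwise inequality fails. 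Your decomposition along $\{M_d>\gamma\}$ versus $\{M_d\le\gamma\}$, with the first part handled by the deterministic estimate of Lemma~\ref{lem:firstterm} while swapping Kesten's heavy-tailed inequality \eqref{kesten} for the light-tailed Lemma~\ref{lem4}, is precisely the quarantine needed to make the argument rigorous, and your accounting ($\ell^\epsilon\cdot\big(\Fdtail(\gamma)/\Fdmtail(\gamma)\big)^{d-1}=\e^{o(\gamma)}\,\e^{-\epsilon\theta\gamma}\to0$) is right. On $\{M_d\le\gamma\}$ you also depart slightly from the paper by first invoking \eqref{e:trivial} to replace $\Fdmtail(\gamma-X_i)$ by $\Ftail(\gamma-X_i)$ before decomposing; this is harmless and merely shifts the long-tailed function to which Assumption~A is applied from $\Fdmtail(\cdot)\,\e^{\theta\cdot}$ to $\Ftail(\cdot)\,\e^{\theta\cdot}$. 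One small remark: the ``slow growth'' estimate $h(\gamma)=\e^{o(\gamma)}$ for a long-tailed $h$ is most cleanly justified by the uniform convergence theorem for long-tailed functions rather than the Ces\`aro sketch you give, but the conclusion is true and the remaining bookkeeping — collecting all overheads as $\e^{o(\gamma)}$ and absorbing them into the surviving $\e^{-\epsilon\theta\gamma}$ — is exactly what the proposition requires.
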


\begin{proof}
Recall
\[
\Em Z^2 = \Em_f\II\{S(\bX)>\gamma\}\,
\prod_{i=1}^d \frac{\Fdtail(\gamma)}{\Fdmtail(\gamma-X_i)}.
\]
 We write
  \begin{equation*}
   \prod_{i=1}^d\frac{\Fdtail(\gamma)}{\Fdmtail(\gamma-X_i)}
      =   H(\gamma)\prod_{i=1}^d
  \frac{\Fdmtail(\gamma)}{\Fdmtail(\gamma-X_i)},
  \end{equation*}
 where $H(\gamma):= \Big[\Fdtail(\gamma)\big/\Fdmtail(\gamma)\Big]^d$.
 Since $F^{*(d-1)}\in\mathcal{L}(\theta)$ we can
 use the decomposition \eqref{LightDecomposition} to write
 $\Fdmtail(\gamma)=h(\gamma)\e^{-\theta\gamma}$ for some $h(\cdot)$
 long tailed function.  Hence, we obtain the following bound
 \begin{equation*}
  \prod_{i=1}^d\frac{\Fdmtail(\gamma)}{\Fdmtail(\gamma-X_i)}
  =\prod_{i=1}^d\frac{h(\gamma)}{h(\gamma-X_i)}\;\frac{\e^{-\theta \gamma}}{\e^{-\theta(\gamma-X_i)}}
  \le\bigg(\sup_{0\le x\le \gamma}\frac{h(\gamma)}{h(\gamma-x)}\bigg)^d\prod_{i=1}^d{\e^{-\theta X_i}}
  =\big(G(\gamma)\big)^d\,\e^{-\theta S(\bX)}
 \end{equation*}
 where $G(\gamma):=\sup_{0\le x\le \gamma}\big\{{h(\gamma)}\big/{h(\gamma-x)}\big\}$.
 Using these we obtain
   \[
  \frac{\Exp Z^2}{\ell^{2-\epsilon}(\gamma)}
  \le
  \frac{H(\gamma)G^d(\gamma)}{\ell^{2-\epsilon}(\gamma)}
   \Em_f \II\{S(\bX)>\gamma\}\, \e^{-\theta S(\bX)},
  \]
  where $\theta>0$. Hence,
  \[
  \Em_f \II\{S(\bX)>\gamma\}\, \e^{-\theta S(\bX)}\leq \e^{-\theta\gamma}\Pm_f(S(\bX)>\gamma)
  =\e^{-\theta\gamma}\,\ell.
  \]
  Thus we get
  \[
  \frac{\Exp Z^2}{\ell^{2-\epsilon}(\gamma)}\leq
  \frac{H(\gamma)G^d(\gamma)\e^{-\theta \gamma}}{\ell^{1-\epsilon}(\gamma)}.
  \]
 Applying the properties of the exponential class
we can write
\[
\ell^{1-\epsilon}=\big(\Fdtail(\gamma)\big)^{1-\epsilon}
=\e^{-\theta(1-\epsilon)\gamma}h_d(\gamma)
\]
 for some long tailed function $h_d$. In consequence, 
 \begin{align*}
  \limsup_{\gamma\rightarrow\infty} &  \frac{\Exp Z^2}{\ell^{2-\epsilon}(\gamma)}
  \leq \limsup_{\gamma\rightarrow\infty}
  \frac{H(\gamma)G^d(\gamma)\e^{-\theta \gamma}}{\ell^{1-\epsilon}(\gamma)}\\
  &= \limsup_{\gamma\rightarrow\infty}
  \frac{H(\gamma)G^d(\gamma)\e^{-\theta \gamma}}{h_d(\gamma)\e^{-(1-\epsilon)\theta\gamma}}=
  \limsup_{\gamma\rightarrow\infty}\frac{H(\gamma)G^d(\gamma)}{h_d(\gamma)}\e^{-\epsilon\theta \gamma}.
 \end{align*}
 Now, property \eqref{LightDecomposition} and Lemma \ref{lem4} and Lemma \ref{lem5}
 imply that none of the functions $H$, $G$, $h_d^{-1}$ and their product cannot increase
 at exponential rate, namely $H(\gamma)G^d(\gamma)/h_d(\gamma)=o(\e^{\theta\epsilon \gamma})$.
 Hence, the last limit is 0.
\halmos
\end{proof}

\section{Conclusions}
\label{sec:conclusion}

In this paper we have described a procedure for implementing  an optimal cross-entropy importance sampling density for the purpose of estimating a rare-event probability, indexed by the rarity parameter $\gamma$. The goal is to estimate the optimal importance sampling density for a finite $\gamma$ within the class of all densities in product form.  This optimal importance sampling density is typically not available analytically and this is why in  practical simulations we estimate it via MCMC simulation from the minimum variance pdf. The numerical examples suggest that the resulting estimator can yield significantly better efficiency  compared to many currently recommended estimators. The same procedure is efficient in both light- and heavy-tailed cases.  This is especially relevant for probabilities involving the Weibull distribution with tail index $\alpha<1$, but close to unity. This setting yields behavior  intermediate between the typical heavy- and light-tailed behavior expected of rare-events. As a result,  while existing procedures are inefficient or fail completely, our method  estimates reliably Weibull probabilities for any values of $\alpha$, including $\alpha>1$.

The  practical implementation of the proposed method depends on a preliminary MCMC step, which  is  a powerful, but poorly understood heuristic  that needs further investigation.
In this article we have established the efficiency of the method in the light- and heavy-tailed case, but have done so by ignoring any errors arising from the preliminary MCMC step.  Future work will need to address the impact of the  MCMC approximation on the quality of the estimator. A good starting point for such an analysis might be to consider  the probabilistic relative error  efficiency concept introduced in
 \cite{tuffin2012probabilistic}.
%
%
%
%
%
%

{\small

\section{Appendix}
\label{sec:appendix}

\subsection{Proofs. Section \ref{subsec:semiparm_IS}}

\begin{proof}[Proof of Lemma \ref{Lemma1}]
First note that for any single-variate function $h$:
\begin{align*}
\int_{\R^d} & h(x_1)\pi(\bx)\,\di\bx= \int_{\R} h(x_1)
\Big( \int_{\R^{d-1}}\pi(x_1,x_2,\ldots,x_d)\,\di x_2\cdots\di x_d \Big)\,\di x_1\\
&= \int h(x_1)\pi_1(x_1)\,\di x_1.
\end{align*}
Next, using the properties of the cross-entropy distance we have that
\[
\pi_1 =\argmin_{g_1\in\scG_1} \int \pi_1(x_1) \log\left(\frac{\pi_1(x_1)}{g_1(x_1)}\right)\di x_1=\argmax_{g_1\in\scG_1}\int \pi_1(x_1)\log g_1(x_1)\,\di x_1.
\]
Applying these two observations for any $i=1,\ldots,d$ gives
\begin{align*}
\argmax_{g_1,\ldots,g_d\in\scG_1} &
\int\pi(\bx)\log\left(\prod_{i=1}^d g_i(x_i)\right)\di \bx\\
&= \argmax_{g_1,\ldots,g_d\in\scG_1}
\sum_{i=1}^d \int\pi(\bx)\log g_i(x_i)\, \di \bx\\
&= \argmax_{g_1,\ldots,g_d\in\scG_1}
\sum_{i=1}^d\int \pi_i(x_i)\log g_i(x_i)\, \di x_i= \sum_{i=1}^d \argmax_{g_i\in\scG_1}\int \pi_i(x_i)\log g_i(x_i)\, \di x_i,
\end{align*}
from where we obtain the solution $g_i=\pi_i$ for all $i=1,\ldots,d$.
\halmos
\end{proof}

\subsection{Proofs.  Section \ref{HeavySection}}
\begin{lemma}\label{LemmaI}
Assume $\zeta\ge n\gamma^{-1}$.  Then
\begin{equation}\label{e:derIn}
 \frac{\partial}{\partial\gamma} I_n(\gamma,\zeta) =
 n\,L\big(\gamma^{-1}\big)\,I_{n-1}\big(\gamma,\zeta-\gamma^{-1}\big)\,\gamma^{-2},
 \qquad n=2,3,\ldots.
\end{equation}
\end{lemma}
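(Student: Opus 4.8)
The plan is to establish the identity \eqref{e:derIn} by induction on $n\geq 2$, in each step applying the Leibniz rule for differentiation under the integral sign to the recursion \eqref{e:recursionIn}. Throughout, the only way $\gamma$ enters $I_m(\gamma,\cdot)$ is through the lower endpoint $\gamma^{-1}$ of the integrations (and the shifted upper endpoints $\zeta-(m-2)\gamma^{-1}$), while $L$ itself carries no $\gamma$; this is what makes the differentiation tractable. For the base case $n=2$ the recursion reads $I_2(\gamma,\zeta)=\int_{\gamma^{-1}}^{\zeta}L(y)\,I_1(\gamma,\zeta-y)\,\dd y$, whose upper limit is $\gamma$-free, so Leibniz produces only a lower-endpoint term $\gamma^{-2}L(\gamma^{-1})\,I_1(\gamma,\zeta-\gamma^{-1})$ and an interior term $\int_{\gamma^{-1}}^{\zeta}L(y)\,\partial_\gamma I_1(\gamma,\zeta-y)\,\dd y$; from $I_1(\gamma,\xi)=\int_{\xi\vee\gamma^{-1}}^1 L(y)\,\dd y$ one reads off $\partial_\gamma I_1(\gamma,\xi)=\gamma^{-2}L(\gamma^{-1})\,\II\{\xi\leq\gamma^{-1}\}$, so under the hypothesis $\zeta\geq 2\gamma^{-1}$ the interior term reduces to $\gamma^{-2}L(\gamma^{-1})\int_{\zeta-\gamma^{-1}}^{\zeta}L(y)\,\dd y$, which one identifies with $\gamma^{-2}L(\gamma^{-1})\,I_1(\gamma,\zeta-\gamma^{-1})$; summing the two contributions gives the asserted factor $2$.

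For the inductive step, suppose \eqref{e:derIn} holds at level $n-1$ and differentiate $I_n(\gamma,\zeta)=\int_{\gamma^{-1}}^{\zeta-(n-2)\gamma^{-1}}L(y)\,I_{n-1}(\gamma,\zeta-y)\,\dd y$. The upper-endpoint contribution is $(n-2)\gamma^{-2}\,L\!\big(\zeta-(n-2)\gamma^{-1}\big)\,I_{n-1}\!\big(\gamma,(n-2)\gamma^{-1}\big)$, and this vanishes: by \eqref{e:recursionIn} the two integration limits defining $I_{n-1}(\gamma,(n-2)\gamma^{-1})$ both equal $\gamma^{-1}$ (and for $n=2$ the prefactor $n-2$ already kills it). The lower-endpoint contribution is $\gamma^{-2}L(\gamma^{-1})\,I_{n-1}(\gamma,\zeta-\gamma^{-1})$. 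The interior term $\int_{\gamma^{-1}}^{\zeta-(n-2)\gamma^{-1}}L(y)\,\partial_\gamma I_{n-1}(\gamma,\zeta-y)\,\dd y$ is handled by inserting the inductive hypothesis $\partial_\gamma I_{n-1}(\gamma,\zeta-y)=(n-1)\gamma^{-2}L(\gamma^{-1})\,I_{n-2}\!\big(\gamma,(\zeta-\gamma^{-1})-y\big)$ and noting the shift identity $\zeta-(n-2)\gamma^{-1}=(\zeta-\gamma^{-1})-\big((n-1)-2\big)\gamma^{-1}$; the surviving integral $\int_{\gamma^{-1}}^{(\zeta-\gamma^{-1})-((n-1)-2)\gamma^{-1}}L(y)\,I_{(n-1)-1}\!\big(\gamma,(\zeta-\gamma^{-1})-y\big)\,\dd y$ is, by one more application of \eqref{e:recursionIn} at level $n-1$ with parameter $\zeta-\gamma^{-1}$, exactly $I_{n-1}(\gamma,\zeta-\gamma^{-1})$. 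Hence the interior term equals $(n-1)\gamma^{-2}L(\gamma^{-1})\,I_{n-1}(\gamma,\zeta-\gamma^{-1})$, and the three contributions add to $n\,L(\gamma^{-1})\,I_{n-1}(\gamma,\zeta-\gamma^{-1})\,\gamma^{-2}$, completing the induction.

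The genuinely delicate point, I expect, is justifying the interchange of $\partial_\gamma$ with the integral in the interior term, together with the fact that the inductive hypothesis is invoked at parameter $\zeta-y$, which as $y$ traverses the domain descends to $(n-2)\gamma^{-1}$ and thus brushes against the degeneracy threshold of level $n-1$. The standing hypothesis $\zeta\geq n\gamma^{-1}$ is exactly what keeps all the nested integration domains non-degenerate and bounded away from the only singularity of $L$ (at $0$; the lower limit stays at $\gamma^{-1}>0$), so that each $\partial_\gamma I_m(\gamma,\cdot)$ is continuous and dominated convergence licenses the interchange; a tidy way to make the induction self-contained is to prove instead the slightly stronger statement that \eqref{e:derIn} holds for every $\zeta>(n-1)\gamma^{-1}$. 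Everything else is routine bookkeeping of the shifts $(m-2)\gamma^{-1}$.
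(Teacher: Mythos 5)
Your argument is the same one the paper gives: induction on $n$, Leibniz's rule applied to the recursion \eqref{e:recursionIn}, with the upper-endpoint contribution vanishing because $I_{n-1}(\gamma,(n-2)\gamma^{-1})=0$ and the interior term collapsed via the recursion at the shifted argument $\zeta-\gamma^{-1}$. You are in fact somewhat more careful than the paper in two places: you spell out why the upper-endpoint term vanishes (the paper merely writes ``$0$''), and you flag that the interior term evaluates the induction hypothesis at $\zeta-y$, which can dip below $n\gamma^{-1}$; your suggestion to run the induction over all $\zeta>(n-1)\gamma^{-1}$ is a genuine tightening of what the paper leaves implicit.

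One step you share with the paper is looser than it looks, though. In the base case you ``identify'' $\int_{\zeta-\gamma^{-1}}^{\zeta}L(y)\,\dd y$ with $I_1(\gamma,\zeta-\gamma^{-1})=\int_{\zeta-\gamma^{-1}}^{1}L(y)\,\dd y$, but these differ by $I_1(\gamma,\zeta)=\int_{\zeta}^{1}L(y)\,\dd y$, which is strictly positive whenever $\zeta<1$. Direct computation actually gives $\partial_\gamma I_2(\gamma,\zeta)=\gamma^{-2}L(\gamma^{-1})\bigl[2\,I_1(\gamma,\zeta-\gamma^{-1})-I_1(\gamma,\zeta)\bigr]$, so the asserted factor $2$ is exact only at $\zeta=1$; and since the induction necessarily invokes the formula at $\zeta-y<1$, widening the admissible range of $\zeta$ as you propose does not by itself absorb the correction term, which propagates up the recursion. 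This is a defect of the paper's own proof (and, strictly, of the lemma as stated for general $\zeta$), not one you introduced; the downstream Proposition~\ref{lem:secondtermpareto} invokes the derivative only at $\zeta=1$ where the extra term vanishes, so the asymptotic conclusion is unaffected, but the pointwise identity \eqref{e:derIn} should be read with that caveat.
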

\begin{proof}
The proof is by induction. Recall the recursive introduction of the $I_n$
functions:
\begin{align*}
I_{1}(\gamma,\zeta)&= \int_{\zeta\,\vee\,\gamma^{-1}}^1  L(y)\,\dd y;\\
I_{n}(\gamma,\zeta)&= \int_{\gamma^{-1}}^{\zeta-(n-2)\gamma^{-1}} L(y)\,
I_{n-1}(\gamma,\zeta-y)\,\dd y,\qquad n=2,3,\ldots
\end{align*}
First consider 
\begin{align*}
& \frac{\partial}{\partial\gamma}I_{2}(\gamma,\zeta)
=\frac{\partial}{\partial\gamma}\int_{\gamma^{-1}}^{\zeta-\gamma^{-1}}L(y)\,
I_{1}(\gamma,\zeta-y)\,\dd y+
\frac{\partial}{\partial\gamma}\,\int_{\zeta-\gamma^{-1}}^{\zeta}L(y)\,\dd y\,
I_{1}(\gamma,\gamma^{-1})\\
&\quad=\bigg[
L(\zeta-\gamma^{-1})I_1(\gamma,\gamma^{-1})
-L\big(\gamma^{-1}\big)\,I_{1}\big(\gamma,\zeta-\gamma^{-1}\big)
-L(\zeta-\gamma^{-1})I_1(\gamma,\gamma^{-1})
-I_1(\gamma,\zeta-\gamma^{-1})L(\gamma^{-1})
\bigg]\, \frac{\dd}{\dd \gamma}\gamma^{-1}\\
&\quad=2\,L\big(\gamma^{-1}\big)\,I_{1}\big(\gamma,\zeta-\gamma^{-1}\big)\,\gamma^{-2}.
\end{align*}
Next, assume that \eqref{e:derIn} holds for $n$. Then
\begin{align*}
& \frac{\partial}{\partial\gamma}I_{n+1}(\gamma,\zeta)
=\frac{\partial}{\partial\gamma} \int_{\gamma^{-1}}^{\zeta-(n-1)\gamma^{-1}} L(y)\,
I_{n}(\gamma,\zeta-y)\,\dd y\\
&\quad=L(\zeta-(n-1)\gamma^{-1})\,I_{n}\big(\gamma,(n-1)\gamma^{-1}\big)\,\frac{\dd}{\dd \gamma}\big(\zeta-(n-1)\gamma^{-1}\big)
-L(\gamma^{-1})\,I_{n}\big(\gamma,\zeta-\gamma^{-1}\big)\,\frac{\dd}{\dd \gamma}\gamma^{-1}\\
&\qquad + \int_{\gamma^{-1}}^{\zeta-(n-1)\gamma^{-1}} L(y)\,
\frac{\partial}{\partial\gamma} I_{n}(\gamma,\zeta-y)\,\dd y\\
&\quad=0 + L(\gamma^{-1})\,I_{n}\big(\gamma,\zeta-\gamma^{-1}\big)\,\gamma^{-2}
+ \int_{\gamma^{-1}}^{\zeta-(n-1)\gamma^{-1}} L(y)\,
n\,L\big(\gamma^{-1}\big)\,I_{n-1}\big(\gamma,\zeta-\gamma^{-1}-y\big)\,\gamma^{-2}\,\dd y\\
&\quad=L(\gamma^{-1})\,I_{n}\big(\gamma,\zeta-\gamma^{-1}\big)\,\gamma^{-2}
+ n\,L\big(\gamma^{-1}\big)
\int_{\gamma^{-1}}^{\zeta-\gamma^{-1}-(n-2)\gamma^{-1}} L(y)\,I_{n-1}\big(\gamma,\zeta-\gamma^{-1}-y\big)\,\dd y \,\gamma^{-2}\\
&\quad=L(\gamma^{-1})\,I_{n}\big(\gamma,\zeta-\gamma^{-1}\big)\,\gamma^{-2}
+ n\,L\big(\gamma^{-1}\big)I_n\big(\gamma,\zeta-\gamma^{-1}\big)\,\gamma^{-2}\\
&\quad=(n+1)L(\gamma^{-1})\,I_{n}\big(\gamma,\zeta-\gamma^{-1}\big)\,\gamma^{-2}
\end{align*}
\halmos
\end{proof}
\begin{lemma}\label{LemmaIG2}
For $n=1,2,\ldots$:
 \begin{equation}\label{e:asymptoticIG}
 I_{n}\big(\gamma,\zeta-\gamma^{-1}\big)=
 I_{n}(\gamma,\zeta)+o(1),\qquad \gamma\to\infty.
\end{equation}
\end{lemma}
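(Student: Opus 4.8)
I would prove the statement by induction on $n$, using the recursive description \eqref{e:recursionIn} of $I_n(\gamma,\cdot)$, the fact that $L$ is decreasing on $(0,1]$, and the growth bound \eqref{e:limsupterm} at the previous level.

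The base case $n=1$ is immediate: for $\gamma$ large,
\[
0\le I_1(\gamma,\zeta-\gamma^{-1})-I_1(\gamma,\zeta)
=\int_{(\zeta-\gamma^{-1})\vee\gamma^{-1}}^{\zeta\vee\gamma^{-1}}L(y)\,\dd y
\le \gamma^{-1}\,L\big((\zeta-\gamma^{-1})\vee\gamma^{-1}\big),
\]
which tends to $0$ for the values of $\zeta$ that occur (for $\zeta$ bounded away from $0$, e.g.\ $\zeta=1$, the right side is $O(\gamma^{-1-\alpha})$). For the inductive step I would substitute \eqref{e:recursionIn} into $I_n(\gamma,\zeta-\gamma^{-1})$ and $I_n(\gamma,\zeta)$ and subtract, obtaining
\[
I_n(\gamma,\zeta)-I_n(\gamma,\zeta-\gamma^{-1})
=\int_{\gamma^{-1}}^{\zeta-(n-1)\gamma^{-1}}\!\!\!L(y)\big[I_{n-1}(\gamma,\zeta-y)-I_{n-1}(\gamma,\zeta-\gamma^{-1}-y)\big]\dd y
+\int_{\zeta-(n-1)\gamma^{-1}}^{\zeta-(n-2)\gamma^{-1}}\!\!\!L(y)\,I_{n-1}(\gamma,\zeta-y)\,\dd y.
\]
The second (``boundary strip'') integral runs over a $y$-interval of width $\gamma^{-1}$ on which $I_{n-1}(\gamma,\zeta-y)$ is evaluated at arguments in $[(n-2)\gamma^{-1},(n-1)\gamma^{-1}]$, i.e.\ just above the threshold below which \eqref{e:recursionIn} makes $I_{n-1}$ vanish, and on which $L(y)$ is small when $\zeta=1$; bounding both factors (using $L$ decreasing and \eqref{e:limsupterm}) makes this term $o(1)$. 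Equivalently --- and perhaps more transparently --- integrating out the last coordinate at once shows that $I_n(\gamma,\zeta)-I_n(\gamma,\zeta-\gamma^{-1})$ equals the integral of $\prod_{k=1}^n L(y_k)$ over the thin slab $\{y_k\ge\gamma^{-1}\ \forall k,\ \sum_{k<n}y_k\le\zeta-\gamma^{-1},\ \zeta-\gamma^{-1}<\sum_{k\le n}y_k\le\zeta\}$ --- the complementary slab contributing nothing because $y_k\ge\gamma^{-1}$ leaves no room for the last coordinate --- that is, the mass placed on an interval of width $\gamma^{-1}$ near $\zeta$ by the $n$-fold convolution of the truncated density $L\cdot\II_{[\gamma^{-1},1]}$.

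The main obstacle is the first (``propagated difference'') integral: the induction hypothesis supplies $I_{n-1}(\gamma,\zeta-y)-I_{n-1}(\gamma,\zeta-\gamma^{-1}-y)=o(1)$ only for each fixed $y$, whereas integrating against $L$ introduces the diverging factor $\int_{\gamma^{-1}}^1 L(y)\,\dd y=O(\gamma^\alpha)$, so one cannot merely quote the hypothesis. To close the recursion I would instead carry through the induction a \emph{quantitative} estimate for $I_m(\gamma,\zeta-\gamma^{-1})-I_m(\gamma,\zeta)$, uniform in $\zeta$, and propagate it through \eqref{e:recursionIn} --- again using that $L$ is decreasing to absorb the $\gamma^{-1}$-shift into a strip of width $\gamma^{-1}$ at each stage. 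Verifying that such a strengthened estimate is self-consistent, i.e.\ that the logarithmic factors picked up at successive levels and the behaviour of $L$ near $0$ and $1$ do not overwhelm it, is the technical heart of the proof. Since only $\zeta=1$ is needed in Proposition~\ref{lem:secondtermpareto}, one may also specialise to that value from the outset, which simplifies the bookkeeping.
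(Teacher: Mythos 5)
Your inductive structure matches the paper's appendix proof: difference the recursion \eqref{e:recursionIn} into a propagated-difference integral and a boundary-strip integral, treat $n=1$ by the mean-value theorem, and feed the induction hypothesis into the propagated term. Where you depart --- and where you are in fact \emph{more} careful than the paper --- is in refusing to let the pointwise $o(1)$ supplied by the induction hypothesis simply pass through the $y$-integral. The paper's inductive step produces the term $o(1)\int_{\gamma^{-1}}^{\zeta-(n-1)\gamma^{-1}}L(y)\,\dd y$ and then declares it to be $o(1)$; but $\int_{\gamma^{-1}}^{c}L(y)\,\dd y\sim\gamma^{\alpha}/\alpha$ diverges, so a pointwise, $\zeta$-dependent $o(1)$ does not justify that conclusion as written. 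You have correctly pinpointed this as the real difficulty, and your instinct --- to propagate a quantitative, uniform-in-$\zeta$ rate through the induction so that the $O(\gamma^{\alpha})$ mass is absorbed --- is the natural repair.

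However, as you say yourself, you have not actually supplied that estimate; the proposal is a plan rather than a proof, and the step you flag as ``the technical heart'' remains open. A smaller issue: your thin-slab reformulation is not quite right. Writing $D_n(\zeta)=\{y_k\in[\gamma^{-1},1]\ \forall k,\ \sum_{k<n}y_k\le\zeta,\ \sum_k y_k>\zeta\}$, one has
\[
I_n(\gamma,\zeta)-I_n(\gamma,\zeta-\gamma^{-1})
=\int_{D_n(\zeta)\setminus D_n(\zeta-\gamma^{-1})}\prod_k L(y_k)\,\dd\by
\;-\;\int_{D_n(\zeta-\gamma^{-1})\setminus D_n(\zeta)}\prod_k L(y_k)\,\dd\by,
\]
and the first set, $\{\sum_{k<n}y_k\in(\zeta-\gamma^{-1},\zeta],\ \sum_k y_k>\zeta,\ y_k\in[\gamma^{-1},1]\}$, is \emph{not} empty: the constraints $y_n\ge\gamma^{-1}$ and $\sum_{k<n}y_k>\zeta-\gamma^{-1}$ make $\sum_k y_k>\zeta$ automatic rather than impossible. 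What makes that piece small is the vanishing of $L(y)=(1-y)^{\alpha}y^{-(\alpha+1)}$ as $y\uparrow1$, not an emptiness argument, and this would need to be stated and estimated explicitly in any completed version of your approach.
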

\begin{proof}
Apply induction and the
the recursive definition of $I_n$ functions.
\begin{itemize}
\item
$n=1$.
\begin{align*}
I_{1}&\big(\gamma,\zeta-\gamma^{-1}\big)=
\int_{\zeta-\gamma^{-1}}^1 L(y)\,\dd y\\
&= I_1(\gamma,\zeta) + \int_{\zeta-\gamma^{-1}}^{\zeta} L(y)\,\dd y\\
&= I_1(\gamma,\zeta) + \gamma^{-1} L(\eta),
\end{align*}
for some $\eta\in(\zeta-\gamma^{-1},\zeta)$
(mean value theorem). Clearly, the second term
is $o(1)$ for $\gamma\to\infty$.
\item
$n\ge 1$. Assume \eqref{e:asymptoticIG} holds. Then
\begin{align*}
I_{n+1}&\big(\gamma,\zeta-\gamma^{-1}\big)=
\int_{\gamma^{-1}}^{\zeta-n\gamma^{-1}}L(y)I_{n}\big(\gamma,\zeta-\gamma^{-1}-y\big)\,\dd y\\
&= \int_{\gamma^{-1}}^{\zeta-(n-1)\gamma^{-1}}L(y)\Big(I_{n}(\gamma,\zeta-y)+o(1)\Big)\,\dd y
-\int_{\zeta-n\gamma^{-1}}^{\zeta-(n-1)\gamma^{-1}}L(y)I_{n}\big(\gamma,\zeta-\gamma^{-1}-y\big)\,\dd y\\
&=I_{n+1}(\gamma,\zeta) + o(1)\,\int_{\gamma^{-1}}^{\zeta-(n-1)\gamma^{-1}}L(y)\,\dd y
-\gamma^{-1}L(\eta)I_{n}\big(\gamma,\zeta-\gamma^{-1}-\eta\big)\\
&= I_{n+1}(\gamma,\zeta) + o(1),\qquad\gamma\to\infty.
\end{align*}
\end{itemize}
\halmos
\end{proof}

}

\bibliographystyle{plain}
\bibliography{mcis}
\end{document}